\newtheorem{theo}{Theorem}[section]
\newtheorem{prop}[theo]{Proposition}
\newtheorem{cor}[theo]{Corollary}
\newtheorem{lemma}[theo]{Lemma}
\theoremstyle{definition}
\newtheorem{defi}[theo]{Definition}
\newtheorem{exa}[theo]{Example}
\newtheorem{rem}[theo]{Remark}
\numberwithin{equation}{section}
\numberwithin{algocf}{section}
\newcommand{\F}{{\mathbb F}}
\newcommand{\Fqns}{\mathbb{F}_{q^n}^*}
\newcommand{\cA}{{\mathcal A}}
\newcommand{\cC}{{\mathcal C}}
\newcommand{\cM}{{\mathcal M}}
\newcommand{\cU}{{\mathcal U}}
\newcommand{\cV}{{\mathcal V}}
\newcommand{\cW}{{\mathcal W}}
\newcommand{\rank}{\mbox{${\rm rk}$}}
\newcommand{\im}{\mbox{\rm im}}
\newcommand{\spann}{\mbox{\rm span}\,}
\renewcommand{\mod}{\mbox{\rm mod}\,}
\newcommand{\mmid}{\mbox{$\,|\,$}}
\newcommand{\GL}{\mathrm{GL}}
\newcommand{\ds}{\textup{d}_{\rm{S}}}
\newcommand{\dr}{\textup{d}_{\rm{R}}}
\newcommand{\dist}[1]{\textup{d}_s(#1)}
\newcounter{alp}
\newcounter{ara}
\newcounter{rom}
\newenvironment{alphalist}{\begin{list}{(\alph{alp})\hfill}{\usecounter{alp}
     \topsep0ex \labelwidth.6cm \leftmargin.6cm \labelsep0cm
     \rightmargin0cm \parsep0ex \itemsep0ex
     \partopsep0ex}}{\end{list}}
\begin{document}
\title{Construction of Subspace Codes through Linkage}
\date{\today}
\author{Heide Gluesing-Luerssen$^*$
and Carolyn Troha\footnote{HGL was partially supported by the National
Science Foundation Grant DMS-1210061. HGL and CT are with the Department of Mathematics, University of Kentucky, Lexington KY 40506-0027, USA;
\{heide.gl, carolyn.troha\}@uky.edu.}}

\maketitle

{\bf Abstract:}
A construction is presented that allows to produce subspace codes of long length using subspace codes of shorter length
in combination with a rank metric code.
The subspace distance of the resulting code, called linkage code, is as good as the minimum subspace distance of the constituent codes.
As a special application, the construction of the best known partial spreads is reproduced.
Finally, for a special case of linkage, a decoding algorithm is presented which amounts to decoding
with respect to the smaller constituent codes and which can be parallelized.

{\bf Keywords:} Random network coding, constant dimension subspace codes, partial spreads.

{\bf MSC (2010):} 11T71, 94B60, 51E23

\section{Introduction}\label{S-Intro}
In~\cite{KoKsch08} Koetter and Kschischang developed an approach to random network coding where
the encoded information is represented as subspaces of a given ambient space.
This accounts for the unknown network topology by assuming that any linear combination of packets may
occur at the nodes of the network.

This approach has led to the area of subspace codes and specifically to intensive research efforts on constructions
of subspace codes with large subspace distance
\cite{KoKu08,KSK09,SiKsch09,EtSi09,GaBo09,EKW10,GPB10,EtVa11,GaPi13,EtSi13,RoTr13,TMBR13,GLMT15}.
Most of the research focuses on constant-dimension codes (CDC's), that is, codes where all subspaces have the same dimension.

One direction for constructing CDC's is based on so-called cyclic orbit codes~\cite{KoKu08,EKW10,RoTr13,TMBR13,GLMT15}, which
are orbits of a subspace in the $\F_q$-vector space~$\F_{q^n}$ under the natural action of $\F_q^{\ast}$.
While the resulting codes have very beneficial algebraic structure, they do not have large cardinality in general.
Taking unions of such codes leads to cyclic subspace codes which still have nice structure, but it remains an open problem how to
take unions of cyclic orbit codes without decreasing the distance.

A second major research direction is based on rank-metric codes as introduced and studied earlier
by Delsarte~\cite{Del78} and Gabidulin~\cite{Gab85}.
Lifting rank-metric codes~\cite{SKK08} is a very simple construction which results in subspace codes where the
reduced row echelon form of each subspace has its identity matrix in the leftmost position.
While these codes are asymptotically good~\cite{KSK09}, they can still be improved upon.
Through a careful study of general reduced row echelon forms, Etzion and Silberstein~\cite{EtSi09} could enlarge lifted
rank-metric codes.
This is known as the echelon-Ferrers construction (or multilevel construction) and has led to the best
codes known at that time.
In the same paper it is also demonstrated how to decode these codes.
In~\cite{EtSi13} the same authors could enlarge their codes even further by making use of the pending dot
construction initiated by Trautmann/Rosenthal~\cite{TrRo10}.
A different way of enlarging lifted rank-metric codes has been presented by Skachek in~\cite{Ska10}, where for
a specific case of the resulting codes also a decoding algorithm is developed.

A slightly different approach is taken in~\cite[Sec.~III]{EtVa11} by Etzion/Vardy and~\cite{GoRa14} by Gorla/Ra\-vagnani.
Therein, the authors focus on CDC's with the best theoretical distance.
This distance is achieved when any two distinct subspaces of the code intersect trivially.
In~\cite{EtVa11,GoRa14} the authors present constructions of such codes with large cardinality (in fact, the same
cardinality).
In finite projective geometry, collections of pairwise trivially intersecting subspaces are known as partial spreads, or as spreads if the
subspaces also cover the entire ambient space.
The cardinalities obtained in~\cite{EtVa11,GoRa14}
have also been achieved via constructions based on finite geometry~\cite{Beu75,DrFr79}.

In this paper we will present a surprisingly simple construction of subspace codes of long length by linking matrix representations
of subspace codes of shorter length and making use of a rank-metric code.
The distance of the resulting subspace code, called a linkage code, is the minimum of the distances of the involved subspace codes
and the related lifted rank-metric code.

By way of an example we will compare our construction to the echelon-Ferrers construction~\cite{EtSi09}
and its improvement~\cite{EtSi13} as well as to the best codes found by sophisticated computer search.
We will see that our codes improve upon those constructed in~\cite{EtSi09,EtSi13} and for lengths where no computer
search has been conducted, it leads to the best codes known so far.

In Section~\ref{S-ParSpr} we show that specific instances of linkage reproduce the constructions of partial spreads
given in~\cite{EtVa11,GoRa14}, and thus linkage may be considered as a generalization of the latter.
In fact, those constructions, which on first sight appear to be quite different,
can be nicely presented in a unified way with the aid of linkage.
Using an optimal binary partial spread of dimension~$3$ and length~$8$, which was found only recently by El-Zanati et al.~\cite{EJSSS10}, we
can improve upon the constructions in~\cite{EtVa11,GoRa14} for the binary case in dimension~$3$ and when the length has
remainder~$2$ modulo~$3$.
This reproduces results derived in~\cite{EJSSS10}.

Finally, in Section~\ref{S-Decoding} we show that for a specific choice of the seed codes, decoding of linkage codes can be reduced to
decoding of the seed codes.
Making use of Gabidulin codes and lifted Gabidulin codes, for which an efficient decoding algorithm has been derived
by Silva et al.~\cite{SKK08}, we obtain linkage codes that can be decoded by an efficient parallelizable algorithm.

\section{A Linkage Construction for Subspace Codes}\label{S-Linkage}
Let us first recall some basic facts about subspace codes and rank-metric codes.
Throughout we fix a finite field~$\F=\F_q$.
A \emph{subspace code of length~$n$} is simply a non-empty collection of subspaces in~$\F^n$.
The code is called a \emph{constant-dimension code} if all subspaces have the same dimension.
The \emph{subspace distance} of a subspace code~$\cC$ is defined as
$\ds(\cC):=\min\{\ds(\cV,\cW)\mid \cV,\,\cW\in\cC,\,\cV\neq\cW\}$,
where the distance between two subspaces is
\[
   \ds(\cV,\cW):=\dim\cV+\dim\cW-2\dim(\cV\cap\cW).
\]
It is a metric on the space of all subspaces, see \cite[Lem.~1]{KoKsch08}.
If $\dim\cV=\dim\cW=k$, then $\ds(\cV,\cW)=2\big(\dim(\cV+\cW)-k\big)$.
As a consequence, if~$\cC$ is a constant-dimension code of dimension~$k$ then
\begin{equation}\label{e-dmax}
   \ds(\cC)\leq\min\{2k,\,2(n-k)\}.
\end{equation}
If the code~$\cC$ consist of a single subspace of dimension~$k$, we define $\ds(\cC):=\min\{2k,\,2(n-k)\}$.
A constant-dimension code of length~$n$, dimension~$k$, cardinality~$N$ will be called an $(n,N,k)$-code, and it is
a $(n,N,k,d)$-code if its subspace distance is~$d$.

A $k\times m$ rank-metric code is a non-empty subset of $\F^{k\times m}$,
endowed with the rank metric $\dr(A,B):=\rank(A-B)$ (which is indeed a metric, see~\cite{Gab85}).
The rank distance of a rank-metric code~$\cC$ is defined in the usual way as
$\dr(\cC):=\min\{\rank(A-B)\mid A,\,B\in\cC,\,A\neq B\}$.
If~$\cC$ consists of single matrix, we define $\dr(\cC):=\min\{k,\,m\}$.
It is well known (see~\cite[Thms.~5.4,~6.3]{Del78} and~\cite[p.~2]{Gab85}) that if~$m\geq k$ and~$\cC$ is a rank-metric code
in~$\F_q^{k\times m}$ with rank distance~$d$, then
\begin{equation}\label{e-rkdist}
     |\cC|\leq q^{m(k-d+1)}.
\end{equation}
Moreover, there exist rank-metric codes of distance~$d$ and size~$q^{m(k-d+1)}$, and such codes can even be constructed as
linear subspaces of $\F_q^{k\times m}$.
They are called \emph{MRD codes}.
The best known class of linear MRD codes are the Gabidulin codes, derived by Gabidulin in~\cite{Gab85}.
Just recently, other constructions of MRD codes were found by de la Cruz et al.~\cite{CKWW15} and Hernandez/Sison~\cite{HeSi15}.
However, as opposed to Gabidulin codes so far no decoding algorithm is known for the latter codes.

If~$\cC\subseteq\F^{k\times m}$ is an MRD code, then the subspace code
\[
   \widehat{\cC}=\{\im(I_k\mid A)\mid A\in\cC\}
\]
is called a \emph{lifted MRD code}.
Here the notation $\im(M)$ stands for the row space of the matrix~$M$ and
$I_k$ denotes the $k\times k$-identity matrix.
If $\dr(\cC)=d$, then $\ds(\widehat{\cC})=2d$, see \cite[Prop.~4]{SKK08}, and
therefore~$\widehat{\cC}$ is a $(k+m,\,q^{m(k-d+1)},\,k,\,2d)$-code.

The following specific class of MRD codes will be used in the next section when studying partial spreads.
\begin{rem}\label{R-COC}
There is a simple construction of MRD codes in $\F^{k\times m}$ of rank distance~$k$.
Let $W\in\F^{k\times m}$ be any matrix of rank~$k$ and $M\in\GL_{m}(\F)$ be the companion
matrix of a primitive polynomial in $\F[x]$ of degree~$m$.
Define
\[
   \cC:=\{WM^l\mid l=0,\ldots,q^{m}-2\}\cup\{0\}\subseteq\F^{k\times m}.
\]
Since $\F[M]\cong\F_{q^m}$ and $|M|=q^m-1$, the code~$\cC$ is a linear rank-metric code
of size~$q^{m}$ and rank distance~$k$.
Hence~$\cC$ is an MRD code.
In fact, it can be shown that~$\cC$ is a Gabidulin code.
\end{rem}

In order to present our linkage construction we need to work with matrix representations of subspaces.
The following terminology will be convenient.

\begin{defi}\label{D-SC-Repr}
A set of matrices $\cM\subseteq\F^{k\times n}$ is called \emph{SC-representing} if
$\rank(M)=k$ for all $M\in\cM$ and $\im(M)\neq\im(M')$ for all $M\neq M'$.
The induced constant-dimension code $\{\im(M)\mid M\in\cM\}$ is denoted by $\cC(\cM)$.
\end{defi}

For example, the rank-metric code~$\cC$ in Remark~\ref{R-COC} forms an SC-representing set of a
cyclic orbit code in the sense of \cite{RoTr13,TMBR13,GLMT15}.
Any set of full row rank matrices in reduced row echelon form is an SC-representing set.
In general, an SC-representing set is simply a subset of orbit representatives of the action of $\GL_k(\F)$ on
$\F^{k\times n}$ via left multiplication.

The linkage construction in the following theorem links subspace codes with the aid of a rank-metric code
and results in a subspace code of longer length without compromising the distance.
It makes use of representing matrices.
The theorem generalizes a former construction in~\cite[Thm.~5.1]{GLMT15}.

\begin{theo}\label{T-Linkage}
For $i=1,2$ let~$\cM_i\subseteq\F^{k\times n_i}$ be SC-representing sets of cardinality~$N_i$.
Thus~$\cC_i=\cC(\cM_i)$ is an $(n_i,\,N_i,\,k)$-code.
Let $\ds(\cC_i)=d_i$.
Furthermore, let~$\cC_R\subseteq\F^{k\times n_2}$ be a linear rank-metric code with rank distance $\dr(\cC_R)=d_R$ and
cardinality $|\cC_R|=N_R$.
Define the subspace code~$\cC$ of length $n:=n_1+n_2$ as
$\cC:=\tilde{\cC}_1\cup\tilde{\cC}_2\cup\tilde{\cC}_3$,
where
\begin{align*}
     &\tilde{\cC}_1=\{\im(U\mmid 0_{k\times n_2})\mid U\in\cM_1\}, \\[.5ex]
     &\tilde{\cC}_2=\{\im(0_{k\times n_1}\!\mmid U)\mid U\in\cM_2\}, \\[.5ex]
     &\tilde{\cC}_3=\{\im(U\mmid M)\mid U\in\cM_1,\,M\in\cC_R\backslash\{0\}\}.
\end{align*}
Then~$\cC$ is a $(n,\,N,\,k,\,d)$-code, where $N=N_2+N_1N_R$ and
$d=\min\{d_1,\,d_2,\,2 d_R\}$.
We write $\cC=\cC_1\ast_{\cC_R}\cC_2$ for the resulting linkage code and call~$\cC$ the code obtained by
linking~$\cC_1$ and~$\cC_2$ through~$\cC_R$.
\end{theo}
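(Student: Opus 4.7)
The plan is to decompose the proof into three stages: first verify that each element of $\cC$ is a $k$-dimensional subspace of $\F^n$, then count $|\cC|$, and finally establish that $\ds(\cC)=d$. The first stage is immediate, because every matrix used to define a subspace in $\tilde{\cC}_1,\tilde{\cC}_2,\tilde{\cC}_3$ has full row rank $k$: appending zero columns preserves rank, and in $\tilde{\cC}_3$ the leading block $U\in\cM_1$ already has rank $k$.

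For cardinality, I would establish distinctness within each piece and pairwise disjointness between pieces. Within $\tilde{\cC}_1$ and $\tilde{\cC}_2$ this is the SC-representing property of $\cM_1,\cM_2$. Within $\tilde{\cC}_3$, an equality $\im(U\mmid M)=\im(U'\mmid M')$ gives $A(U\mmid M)=(U'\mmid M')$ for some $A\in\GL_k(\F)$; since $\im(U)=\im(U')$ forces $U=U'$ by the SC-representing property, and $U$ has full row rank, this yields $A=I_k$ and hence $M=M'$. Pairwise disjointness is read off the shape of representatives: any element of $\tilde{\cC}_3$ has a first $n_1$-block of rank $k$ (excluding coincidence with $\tilde{\cC}_2$) and a nonzero last $n_2$-block (excluding $\tilde{\cC}_1$); the same shape considerations separate $\tilde{\cC}_1$ from $\tilde{\cC}_2$. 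The total is then $N_1+N_2+N_1(N_R-1)=N_2+N_1 N_R$.

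The substantive part is the distance bound, which I would organize as a case analysis on the pieces containing a pair $\cV\neq\cW$ in $\cC$, using the identity $\ds(\cV,\cW)=2(\dim(\cV+\cW)-k)$ throughout. Pairs inside $\tilde{\cC}_i$ inherit distance at least $d_i$. A pair with $\cV\in\tilde{\cC}_2$ and $\cW\in\tilde{\cC}_1\cup\tilde{\cC}_3$ satisfies $\cV\cap\cW=\{0\}$: a vector of $\cW$ lying in $\{0\}\times\F^{n_2}$ arises as $(xU,xM)$ with $xU=0$, and full row rank of $U$ forces $x=0$; hence $\ds(\cV,\cW)=2k\geq d$. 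For pairs involving $\tilde{\cC}_3$ on both sides, write $\cV=\im(U\mmid M)$ and $\cW=\im(U'\mmid M')$ with $M,M'\in\cC_R$, allowing $M=0$ to absorb the $\tilde{\cC}_1$--$\tilde{\cC}_3$ case. If $U\neq U'$, project $\cV+\cW$ onto the first $n_1$ coordinates: the image is $\im(U)+\im(U')$, of dimension at least $k+d_1/2$ since $\im(U),\im(U')\in\cC_1$, yielding $\ds(\cV,\cW)\geq d_1$. If $U=U'$, then subtracting representatives gives $\cV+\cW=\cV+\im(0\mmid M-M')$; the same full-row-rank argument shows $\cV\cap\im(0\mmid M-M')=\{0\}$, and linearity of $\cC_R$ forces $\rank(M-M')\geq d_R$, so $\dim(\cV+\cW)\geq k+d_R$ and $\ds(\cV,\cW)\geq 2d_R$.

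Finally, attainment of $d$ follows by exhibiting pairs realising each of $d_1,d_2,2d_R$: pairs in $\tilde{\cC}_1,\tilde{\cC}_2$ lifted from minimum-distance pairs of $\cC_1,\cC_2$ deliver $d_1,d_2$, and fixing any $U\in\cM_1$ together with $M,M'\in\cC_R$ with $\rank(M-M')=d_R$ gives a pair at distance $2d_R$ (linearity lets us take $M=0$ if one wants one subspace to lie in $\tilde{\cC}_1$). The principal obstacle is the $\tilde{\cC}_3$--$\tilde{\cC}_3$ subcase with $U=U'$: the argument rests entirely on linearity of $\cC_R$, which guarantees $M-M'\in\cC_R$ and hence $\rank(M-M')\geq d_R$; without that hypothesis the theorem would not hold in the stated form.
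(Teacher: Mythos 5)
Your proof is correct and follows essentially the same route as the paper's: the same case split on which of $\tilde{\cC}_1,\tilde{\cC}_2,\tilde{\cC}_3$ the two subspaces lie in, with full row rank of $U$ and linearity of $\cC_R$ (giving $\rank(M-M')\ge d_R$) doing the work in the key cases. The only differences are cosmetic --- you bound $\dim(\cV+\cW)$ via projections where the paper bounds $\dim(\cV\cap\cW)$ via injections into $\im(U)\cap\im(U')$ and $\ker(M-M')$, and you absorb the $\tilde{\cC}_1$--$\tilde{\cC}_3$ case into the $\tilde{\cC}_3$--$\tilde{\cC}_3$ case by allowing $M=0$; you are in fact slightly more explicit than the paper about disjointness within $\tilde{\cC}_3$ and about attainment of the minimum.
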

The notation $\cC=\cC_1\ast_{\cC_R}\cC_2$ has to be used with care because the code~$\cC$ depends on the
representing sets~$\cM_1$ and~$\cM_2$ and not only on the codes~$\cC_1$ and~$\cC_2$.
Thus the notation $\cM_1\ast_{\cC_R}\cM_2$ is actually the accurate one, but we prefer the former because the properties of the
linkage that we are interested in are associated to the subspace codes~$\cC_1,\,\cC_2$.
The sets~$\cM_1,\,\cM_2$ are mere technicalities in our context, and the notation $\cC_1\ast_{\cC_R}\cC_2$ will not lead to any confusion.

However, in order to illustrate the dependence on the SC-representing sets for~$\cC_1$ and~$\cC_2$ we will present, after the proof,
an example showing that different choices lead in general to different distance distributions of the linkage code.

\begin{proof}
The cardinality of~$\cC$ is clear because the three sets~$\tilde{\cC}_i$ are pairwise disjoint.
Furthermore, it is obvious that $\ds(\tilde{\cC}_i)=\ds(\cC_i)$ for $i=1,2$.
Moreover, it is clear that each subspace in~$\tilde{\cC}_2$ intersects trivially with each subspace
in~$\tilde{\cC}_1$ and $\tilde{\cC}_3$.
Thus $\ds(\cW_1,\cW_2)=2k$ for all $\cW_1\in\tilde{\cC}_2$ and $\cW_2\in\tilde{\cC}_1\cup\tilde{\cC}_3$.

Next, let~$\cU=\im(U\mmid 0)\in\tilde{\cC}_1$ and~$\cV=\im(U'\mmid M)\in\tilde{\cC}_3$.
Thus~$U,\,U'\in\cM_1$ and $M\in\cC_R\backslash\{0\}$.
Then $\rank(M)\geq d_R$ by linearity of the code~$\cC_R$ and so $\dim\ker M\leq k-d_R$, where
$\ker M=\{x\in\F^k\mid xM=0\}$.
Let now $v\in\cU\cap\cV$.
Then $v=x(U\mmid 0)=y(U'\mmid M)$ for some $x,\,y\in\F^k$ and thus $y\in\ker M$ and
$xU=yU'\in\im(U)\cap\im(U')$.
Since the maps from $\cU\cap\cV$ mapping $x(U\mmid 0)$ to~$xU$ and $y(U'\mmid M)$ to~$y$ are both injective, this
yields
\[
  \dim(\cU\cap\cV)\leq\min\{\dim(\im(U)\cap\im(U')),\,\dim(\ker(M))\}.
\]
Now one concludes $\ds(\cU,\,\cV)\geq\min\{d_1,\,2d_R\}$, as desired.

Lastly, let $\cU=\im(U\mmid M),\,\cV=\im(U'\mmid M')\in\tilde{\cC}_3$, and
let~$\cU\neq\cV$, thus $U\neq U'$ or $M\neq M'$.
Let $v\in\cU\cap\cV$.
Then
$v=x(U\mmid M)=y(U'\mmid M')$ for some $x,\,y\in\F^k$.
Hence
\[
   xU=yU'\in\im(U)\cap\im(U')\ \text{ and }\ xM=yM'\in\im(M)\cap\im(M').
\]
If $U\neq U'$, then $\dim(\cU\cap\cV)\leq\dim(\im(U)\cap\im(U'))$ and
$\ds(\cU,\,\cV)\geq \ds(\im(U),\,\im(U'))\geq d_1$.
If $U=U'$, then $x=y$ because $U$ has full row rank.
Moreover, $M-M'\in\cC_R\backslash\{0\}$ and thus $\dim(\ker(M-M'))\leq k-d_R$.
Now $x\in\ker(M-M')$ along with the injectivity of the map $x(U\mmid M)\longmapsto x$
from $\cU\cap\cV$ to $\ker(M-M')$
shows that $\dim(\cU\cap\cV)\leq\dim(\ker(M-M'))\leq k-d_R$, and thus
$\ds(\cU,\,\cV)\geq 2d_R$.
This concludes the proof.
\end{proof}

The following example shows that different choices of the SC-representing sets for~$\cC_1$ and~$\cC_2$ lead
to different distance distributions of the linkage code.
Since we will not further study the distance distribution of linkage codes, we continue to use the notation $\cC_1\ast_{\cC_R}\cC_2$
for the linkage.
\begin{exa}\label{E-DistDistr}
Let $(n_1,n_2,k,q)=(4,4,2,2)$ and
\[
 \cM_1=\cM_2=\Big\{\begin{pmatrix}1&0&1&0\\0&1&0&0\end{pmatrix},\begin{pmatrix}1&0&0&0\\0&1&0&0\end{pmatrix}\Big\}
 \text{ and }
 \cM_1'=\Big\{\begin{pmatrix}1&1&1&0\\0&1&0&0\end{pmatrix},\begin{pmatrix}0&1&0&0\\1&1&0&0\end{pmatrix}\Big\}
\]
and
\[
  \cC_R=\Big\{\begin{pmatrix}0&0&0&0\\0&0&0&0\end{pmatrix},\begin{pmatrix}1&0&0&0\\0&1&0&0\end{pmatrix},
              \begin{pmatrix}0&1&0&0\\0&0&1&0\end{pmatrix},\begin{pmatrix}1&1&0&0\\0&1&1&0\end{pmatrix}\Big\}.
\]
Note that $\cC(\cM_1)=\cC(\cM_1')$.
We find that in the linkage code $\cM_1\ast_{\cC_R}\cM_2$ (see the notation of the paragraph after Theorem~\ref{T-Linkage})
there exist~$5$ pairs of distinct subspaces with subspace distance~$2$ and all other pairs have subspace distance~$4$, whereas in
the linkage code $\cM_1'\ast_{\cC_R}\cM_2$ only~$3$ pairs have subspace distance~$2$ and all others have subspace distance~$4$.
\end{exa}

The next two examples illustrate that we can easily construct very large codes of long length by suitable linkage.

\begin{exa}\label{E-LargeLink}
We aim at constructing a constant-dimension code over~$\F_2$ of length~$13$,
dimension~$k=3$, and distance~$4$.
Let $n_1=7$ and $n_2=6$.
The largest known codes of dimension~$3$ and length~$7$ (resp.~$6$) with
distance~$4$ have cardinality~$329$ (resp.~$77$), see~\cite[Tables~I and~II]{BrRei14} by Braun/Reichelt as well as \cite{HKK14}
by Honold et al., where it is shown that~$77$ is actually the largest possible size for length~$6$.
We choose these codes for~$\cC_1$ and~$\cC_2$, respectively, and an MRD code~$\cC_R$ in $\F_2^{3\times 6}$
with rank distance~$d_R=2$ and thus cardinality~$N_R=2^{6(3-2+1)}=2^{12}$ due to~\eqref{e-rkdist}.
The resulting linkage code has therefore cardinality $77+2^{12}\cdot329=1,347,661$.
This is lower than the cardinality of the best known code with the same parameters,
which is $1,597,245$, see~\cite[App.]{BrRei14}.
But the latter has been found by extended computer search, whereas the linkage code is readily available
once~$\cC_1$ and~$\cC_2$ have been found.
The linkage code beats the codes that have been found  by Etzion/Silberstein with the aid of the multilevel construction~\cite{EtSi09}
where the best such code has cardinality~$1,192,587$, see~\cite{KoKu08}.
It also beats the modified multilevel construction~\cite{EtSi13} by Etzion/Silberstein, where the best such code has
size $1,221,296$.
This particular construction is a refinement of the pending dot construction appearing first in~\cite{TrRo10} by Trautmann/Rosenthal.
The following table presents the cardinality of further linkage constructions for $q=2,\,k=3,\,\ds=4$ and various lengths~$n$.
We make use of the best codes found in~\cite[Table~II]{BrRei14} for lengths~$6,\ldots,9$ and an MRD code~$\cC_R\subseteq\F^{3\times n_2}$
with rank distance~$2$. Thus~$N_R=2^{2n_2}$ thanks to~\eqref{e-rkdist}.
We also show the largest size obtained via the modified multilevel (MML)
construction~\cite[Thm.~17]{EtSi13} (which always beats the multilevel construction in~\cite{EtSi09}) as well as the largest size known so far.
For $n\leq14$ the latter has been found by computer search~\cite[Tables~I and~II]{BrRei14}, while for $n=15$ no such search has been conducted yet
and linkage with $n_1=9,\,n_2=6$ results in the largest known code.
Note that for all $n$ shown in the table, every partition into $n=n_1+n_2$ leads to a linkage code that is larger than the MML construction.
This is probably due to the fact that the MML construction leads to subspace codes that contain a lifted MRD code.
This restriction also restricts the size of these codes.
\[
\begin{array}{||c|c|c|r|r|r|r|r|r||}
\hline
n&n_1&n_2&N_1\ &N_2\ &\text{Linkage}\ & \text{MML}\quad &\text{Largest Known}\\\hline\hline
12&6&6&77&77&315,469& 305,324 &385,515\\ \hline
13&6&7&77&329&1,261,897& 1,221,296 & 1,597,245\\ \hline
13&7&6&329&77&1,347,661& 1,221,296 &1,597,245\\ \hline
14&7&7&329&329&5,390,665& 4,885,184 & 5,996,178 \\ \hline
14&6&8&77&1,312&5,047,584& 4,885,184 & 5,996,178 \\ \hline
14&8&6&1,312&77&5,374,029& 4,885,184 & 5,996,178 \\ \hline
15&8&7&1,312&329  &21,496,137&19,540,736 & 23,322,701 \\ \hline
15&7&8& 329  &1,312&21,562,656&19,540,736  & 23,322,701 \\ \hline
15&6&9& 77  &5,694&20,190,782&19,540,736& 23,322,701 \\ \hline
15&9&6& 5,694& 77&23,322,701&19,540,736& 23,322,701 \\ \hline\hline
\end{array}
\]
It is worth pointing out that for length~$n=13$ the largest known cardinality~$1,597,245$ is actually the optimum by the anticode bound,
and the existence of a code
with that size has been established by Braun et al.\ in~\cite{BEOVW13} via a 2-analogue of a Steiner triple system.
\end{exa}

\begin{exa}\label{E-MRDLinkage}
Let us consider Theorem~\ref{T-Linkage} for the case where~$\cC_1$ is a lifted MRD code and~$\cC_R$ is an MRD
code.
For~$\cC_2$ we may choose a lifted MRD code or an arbitrary subspace code.
Let us consider the case where~$\cC_2$ is a lifted MRD code.
Thus, let $n=n_1+n_2$, where $n_i\geq 2k$ for $i=1,2$, and~$\cC_i$ be the lifting of an MRD code in~$\F^{k\times(n_i-k)}$
with rank distance~$d$.
Then $|\cC_i|=q^{(n_i-k)(k-d+1)}$ and $\ds(\cC_i)=2d$.
Moreover, let $\cC_R\subseteq\F^{k\times n_2}$ be an MRD code of rank distance~$d$.
Thus, $|\cC_R|=q^{n_2(k-d+1)}$.
By Theorem~\ref{T-Linkage} the linkage code $\cC_1\ast_{\cC_R}\cC_2$ has subspace distance~$2d$ and cardinality
$q^{(n_2-k)(k-d+1)}+q^{(n-k)(k-d+1)}$.
Note that the second term is the cardinality of a lifted MRD code in~$\F^n$ with subspace distance~$2d$.
Thus, linkage always results in a better code than lifting.
In fact, with our choice the code $\tilde{\cC}_1\cup\tilde{\cC}_3$ in Theorem~\ref{T-Linkage} is a
lifted MRD code and thus the cardinality of the linkage code is clearly larger than that of a lifted MRD code.
Furthermore we observe that only the first term depends on the partition $n=n_1+n_2$, and that
the cardinality of $\cC_1\ast_{\cC_R}\cC_2$ is largest when~$n_2$ is largest.
The following table shows the size of the linkage construction for $q=2,\,k=3,\,\ds=4$ and various lengths.
In each case,~$\cC_1$ is a
lifted MRD code of distance~$4$ and~$\cC_R$ is an MRD code of rank distance~$2$.
Each given length~$n$ is split into $n=n_1+n_2$ such that $n_2$ is maximal subject to  $n_i\geq 2k$ for $i=1,2$.
In the column denoted by ``$\text{Link}_{\text{largest}}$'' we present the cardinality of the linkage code where we use
the largest known subspace code for~$\cC_2$.
In the column ``$\text{Link}_{\text{MRD}}$'' we use a lifted MRD code for~$\cC_2$.
For comparison we also show the size,~$2^{2(n-3)}$, of a lifted MRD code of length~$n$.
It should be noted that the linkage codes are smaller than the codes obtained from the MML construction; see the previous table.
This is explained by the fact that the MML construction is a careful design to create additional subspaces without
compromising the distance.
It may be regarded as a replacement of the code $\tilde{\cC}_2$ in Theorem~\ref{T-Linkage} by a larger set,
where the zero block matrix is replaced by suitable matrices.
In the column ``$\text{Extended Lifted MRD}$'' we illustrate that our codes are slightly smaller than those
constructed in~\cite{Ska10} by Skachek\footnote{One can show that in our situation
the optimal design choice for the parameter $h_{n}=h_{\ell+m}$ in~\cite{Ska10} is $h_{n}=0$, and therefore the cardinality of the resulting
code is as in Section~IV.C of~\cite{Ska10}.}, which are also subspace codes containing a lifted MRD code (and are
smaller than the MML codes).
\[
\begin{array}{||c|c|c|r|r|r|r||}
\hline
n&n_1&n_2&\text{Link}_{\text{largest}}&\text{Link}_{\text{MRD}}\ &\text{Lifted MRD}& \text{Extended Lifted MRD}\\\hline\hline
12&6&6   &262,221&262,208&   262,144 &   266,304\\ \hline
13&6&7   &1,048,905&1,048,832& 1,048,576 & 1,065,216\\ \hline
14&6&8   &4,195,616&4,195,328& 4,194,304 & 4,260,864  \\ \hline
\end{array}
\]
We will return to these codes in Theorem~\ref{T-LinkMRD} when we investigate decoding.
\end{exa}

The linkage construction can be viewed as a generalization of two specific constructions that can be found in the literature.
We will discuss the details in the next section, where we turn to subspace codes with largest possible distance.

\section{Partial Spreads}\label{S-ParSpr}
With the aid of Theorem~\ref{T-Linkage} we can construct optimal partial spreads for certain cases.
Recall that a \emph{partial spread} in~$\F^n$ is a collection of subspaces that pairwise intersect trivially.
If all subspaces have the same dimension, say~$k$, then this is simply a constant-dimension code of dimension~$k$
and distance~$2k$, and we call the code a \emph{partial $k$-spread}.
It is well known that if~$k$ divides~$n$, then an optimal partial $k$-spread (i.e., a partial $k$-spread of maximum cardinality)
is a $k$-\emph{spread}, i.e., the spaces intersect trivially and cover the entire~$\F^n$.
In this case a simple counting argument shows that the cardinality is~$(q^n-1)/(q^k-1)$, where $\F=\F_q$.
Several constructions of $k$-spreads are known.
For later reference we provide the following two options.
\begin{rem}\label{R-Spread}
Let $k$ divide~$n$.
\begin{alphalist}
\item \cite[Thm.~11]{RoTr13}
      The orbit of the subfield~$\F_{q^k}$ in the field~$\F_{q^n}$ under the natural action of the group~$\Fqns$ is a
      $k$-spread in~$\F_{q^n}$.
\item \cite[Thm.~6, Rem.~8]{GMR12} Let $m=n/k$ and $M\in\GL_{k}(\F)$ be the companion matrix of a primitive polynomial of degree~$k$.
      Then the set $\{\im(A_1,\ldots,A_m)\mid A_i\in\F_q[M],\text{ not all~$A_i$ are zero}\}$ is a $k$-spread in~$\F^n$.
      It is called a \emph{Desarguesian spread}.
\end{alphalist}
\end{rem}
If~$k$ does not divide~$n$, then the maximum size of a partial $k$-spread in~$\F_q^n$ is in general not known -- with one
exception which will be considered below in further detail.
The following result can be found in~\cite[Thms.~4.1, 4.2]{Beu75} and \cite[Thm.~7]{DrFr79};
see also \cite[Thm.~3]{EJSSS10}.
\begin{theo}\label{T-MaxPartSpread}
Let $n\:(\mod\:k)=c$, where $c\in\{0,\ldots,k-1\}$.
Denote the largest possible cardinality of a partial $k$-spread in $\F_q^n$ by~$\mu(n,k)$. Then
\[
   \mu(n,k)\geq\frac{q^n-q^c}{q^k-1}-q^c+1
\]
with equality if $c\in\{0,1\}$.
Furthermore, if $c>1$ then
\[
   \mu(n,k)\leq \frac{q^n-q^c}{q^k-1}-\lfloor\theta\rfloor-1, \text{ where }
   \theta=\frac{\sqrt{1+4q^k(q^k-q^c)}-(2q^k-2q^c+1)}{2}.
\]
\end{theo}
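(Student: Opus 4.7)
The plan is to prove the lower bound by an explicit linkage construction and to adopt the matching upper bounds (and hence the sharpness for $c\in\{0,1\}$) directly from the cited classical papers, since those results lie outside the scope of the linkage machinery developed here.

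Write $n=mk+c$ with $0\leq c<k$. The cases $m\leq 1$ are trivial: if $n<k$ the bound is non-positive, and if $k\leq n<2k$ it equals $1$ and is attained by any single $k$-dimensional subspace of $\F_q^n$. For $m\geq 2$ I would split $n=n_1+n_2$ with $n_1=(m-1)k$ a multiple of $k$ and $n_2=k+c$. By Remark~\ref{R-Spread} there is a $k$-spread $\cC_1$ of $\F_q^{n_1}$ of cardinality $N_1=(q^{(m-1)k}-1)/(q^k-1)$; I take $\cC_2$ to be any single $k$-dimensional subspace of $\F_q^{n_2}$, so $N_2=1$; and by Remark~\ref{R-COC} there exists an MRD code $\cC_R\subseteq\F_q^{k\times n_2}$ of rank distance $k$ saturating \eqref{e-rkdist}, with $|\cC_R|=q^{k+c}$. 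Theorem~\ref{T-Linkage} then produces a linkage code of cardinality
\[
   N_2 + N_1\,|\cC_R|
   = 1 + \frac{(q^{(m-1)k}-1)\,q^{k+c}}{q^k-1}
   = \frac{q^n-q^{k+c}+q^k-1}{q^k-1},
\]
which a routine manipulation rewrites as $(q^n-q^c)/(q^k-1)-q^c+1$, matching the claimed lower bound.

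The main obstacle is verifying that the subspace distance of $\cC_1\ast_{\cC_R}\cC_2$ really equals $2k$, not merely the value $\min\{d_1,d_2,2d_R\}=\min\{2k,2c,2k\}=2c$ one would get by plugging blindly into Theorem~\ref{T-Linkage}. The resolution is that the constraint $d_2$ enters the theorem only through pairs within $\tilde{\cC}_2$, which are vacuous here since $|\tilde{\cC}_2|=1$. Revisiting the case analysis inside the proof of Theorem~\ref{T-Linkage}, every pair of distinct subspaces of $\cC_1\ast_{\cC_R}\cC_2$ has distance at least $\min\{d_1,2d_R\}=2k$ (for pairs within $\tilde{\cC}_1$, within $\tilde{\cC}_3$, or across $\tilde{\cC}_1\leftrightarrow\tilde{\cC}_3$), or exactly $2k$ (for cross pairs involving $\tilde{\cC}_2$, which by construction intersects every other subspace trivially). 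Hence the linkage is a partial $k$-spread of the announced size.

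Finally, the upper bound for $c>1$ is the Drake--Freeman packing bound \cite[Thm.~7]{DrFr79}; its proof proceeds by counting the points of $\F_q^n$ left uncovered by a partial $k$-spread, and the quadratic inequality on the size of the spread arising from this count yields the expression for $\theta$. Sharpness for $c=0$ is immediate from Remark~\ref{R-Spread}, and for $c=1$ it follows from Beutelspacher's optimality theorem \cite[Thms.~4.1, 4.2]{Beu75}, whose upper bound coincides with our lower bound. I would simply cite these results to close the argument.
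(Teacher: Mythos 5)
Your proposal is correct, but note that the paper does not prove this statement at all: Theorem~\ref{T-MaxPartSpread} is presented as a known result, with the lower bound and its sharpness for $c\in\{0,1\}$ attributed to \cite{Beu75} and the upper bound for $c>1$ to \cite{DrFr79}. What you have done is reconstruct the lower bound from the paper's own machinery, and this reconstruction is sound: your choice $n_1=(m-1)k$, $n_2=k+c$, with $\cC_1$ a $k$-spread from Remark~\ref{R-Spread}, $\cC_2$ a single $k$-dimensional subspace, and $\cC_R$ the rank-distance-$k$ MRD code of Remark~\ref{R-COC}, is exactly the construction the paper itself identifies (in Examples~\ref{E-EtVa11} and~\ref{E-GoRa14} and in Theorem~\ref{T-MaxParSpread}(a)) as the linkage form of the Etzion--Vardy and Gorla--Ravagnani partial spreads of cardinality $m(n,k)$ from \eqref{e-cardmin}. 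Your cardinality computation checks out, and your handling of the apparent obstruction $d_2=\min\{2k,2c\}=2c$ is careful and correct --- indeed more explicit than the paper, which simply asserts that the linkage of partial spreads through a rank-distance-$k$ MRD code is again a partial spread; since $|\tilde{\cC}_2|=1$ the term $d_2$ is vacuous and every remaining pair has distance $2k$. The only caveat is that your argument genuinely proves only the inequality $\mu(n,k)\geq m(n,k)$; the upper bounds (hence equality for $c\in\{0,1\}$ and the $\lfloloor\theta\rfloor$-type bound for $c>1$, which require the counting arguments of Beutelspacher and Drake--Freeman) remain on citation, exactly as in the paper, so nothing is lost relative to the source. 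One typographical slip: you mean $\lfloor\theta\rfloor$ where the upper bound is quoted.
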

Constructions of partial $k$-spreads and cardinality
\begin{equation}\label{e-cardmin}
  m(n,k):=\frac{q^n-q^c}{q^k-1}-q^c+1, \text{ where } n\:(\mod\:k)=c,
\end{equation}
were presented in \cite[Thms.~4.2]{Beu75} as well as \cite[Thm.~11]{EtVa11} and \cite[Thm.~13]{GoRa14}.
Hence for~$c\in\{0,1\}$ these partial spreads have maximum possible cardinality.
The latter two constructions are special cases of our linkage and will be described in our terminology in the following examples.

\begin{exa}\label{E-EtVa11}
We describe the construction of partial $k$-spreads by Etzion/Vardy~\cite[Thm.~11]{EtVa11}.
Let $n\geq2k$ and write $n=lk+c$, where $c\in\{0,\ldots,k-1\}$.
Set $n_1=k(l-1)$ and $n_2=k+c$.
Consider the $\F_q$-vector space $\F_{q^{n_1}}\times\F_{q^{n_2}}$.
In~$\F_{q^{n_1}}$ choose the $k$-spread~$\cC_1$ given by the orbit of the subfield~$\F_{q^k}$ under the action of the
cyclic group $\F_{q^{n_1}}^*$; see Remark~\ref{R-Spread}(a).
Furthermore, let~$\beta$ be a primitive element of~$\F_{q^{n_2}}$ and set
$\cC_2=\big\{\spann_{\F}\{1,\,\beta,\ldots,\,\beta^{k-1}\}\big\}$.
Note that this is trivially a partial $k$-spread in~$\F_{q^{n_2}}$ of maximal possible cardinality because $k>n_2/2$.
Consider the coordinate map w.r.t.\ the basis $\{1,\,\beta,\ldots,\,\beta^{n_2-1}\}$ of~$\F_{q^{n_2}}$, that is,
\[
  \varphi:\F_{q^{n_2}}\longrightarrow\F_q^{n_2},\quad \sum_{i=0}^{n_2-1} f_i\beta^i\longmapsto(f_0,\ldots,f_{n_2-1}).
\]
Using the identification~$\varphi$, the code~$\cC_2$ simply translates into $\cC_2=\{\im(I_k\mid 0_{k\times c})\}$.
Finally, in~$\F^{k\times n_2}$ choose the rank-metric code
\[
    \cC_R=\{(I_k\mid 0_{k\times c})M^j\mid j=0,\ldots,q^{n_2}-2\}\cup\{0\},
\]
where~$M$ is the companion matrix of the minimal polynomial of~$\beta$ over~$\F_q$.
Note that the matrix $(I_k\mid 0_{k\times c})M^j$ consist exactly of the rows $\varphi(\beta^j),\ldots,\varphi(\beta^{j+k-1})$.
By Remark~\ref{R-COC}, $\cC_R$ is an MRD code of rank distance~$k$.
Identifying $\F_{q^{n_1}}$ with~$\F_q^{n_1}$, the linkage code $\cC_1\ast_{\cC_R}\cC_2$
is exactly the partial $k$-spread constructed in \cite[Thm.~11]{EtVa11}.
It has cardinality $1+q^{n_2}(q^{n_1}-1)/(q^k-1)$, and this is $m(n,k)$.
\end{exa}

\begin{exa}\label{E-GoRa14}
Essentially the same construction as in Example~\ref{E-EtVa11} but with different specifications of the constituent codes
is used by Gorla/Ravagnani in~\cite[Thm.~13]{GoRa14}.
Again, let $n=lk+c$ and set $n_1=k(l-1)$ and $n_2=k+c$.
Then the code constructed in~\cite[Thm.~13]{GoRa14} is the linkage $\cC_1\ast_{\cC_R}\cC_2$ with the following specifications:
$\cC_1$ is a Desarguesian $k$-spread in~$\F^{n_1}$ (see~Remark~\ref{R-Spread}) while
$\cC_2$ is the subspace code $\{\im(0_{k\times c}\mid I_k)\}$ and~$\cC_R$ is an
MRD code in~$\F^{k\times n_2}$ as in Remark~\ref{R-COC} with matrix~$W=(0\mid I_k)$, thus the nonzero
matrices in~$\cC_R$ are the last~$k$ rows of the matrices~$M^l$.
In addition to the construction, the authors also present a decoding algorithm for their partial spreads by making explicit use of
the structure of the Desarguesian spread; see~\cite[Sec.~5]{GoRa14}.
In contrast, no decoding algorithm is given in~\cite{EtVa11} for the partial spreads constructed therein.
\end{exa}

Instead of partitioning~$n$ into $n_1+n_2$ with the specific choice of~$n_2=k+c$ as in the previous examples,
we may use any other splitting $n=lk+n_2$.
This will be summarized in the next result where we also address maximality of the partial spread.
A partial $k$-spread in~$\F^n$ is called \emph{maximal} if it is maximal with respect to inclusion, that is, it is not properly
contained in any other partial $k$-spread.
The following result shows, among other things, that linking a $k$-spread and a maximal partial $k$-spread through an MRD code
leads to a maximal partial $k$-spread.
\begin{theo}\label{T-MaxParSpread}
Let $n=lk+n_2$, where $l\geq 1$ and $n_2\geq k$.
Let~$\cC_1$ be a $k$-spread in~$\F^{lk}$ and $\cC_2$ be a partial $k$-spread in~$\F^{n_2}$.
Furthermore, let~$\cC_R$ be a linear MRD code in~$\F^{k\times n_2}$ with rank distance~$k$ and thus cardinality $q^{n_2}$.
Finally, let~$\cC=\cC_1\ast_{\cC_R}\cC_2$ be the resulting linkage code as in Theorem~\ref{T-Linkage}.
\begin{alphalist}
\item If~$|\cC_2|=m(n_2,k)$, then $|\cC|=m(n,k)$.
\item If~$\cC_2$ is a maximal partial $k$-spread then so is~$\cC$.
\end{alphalist}
\end{theo}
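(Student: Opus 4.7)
For part (a), the plan is a direct cardinality count. Theorem~\ref{T-Linkage} gives $|\cC|=|\cC_2|+|\cC_1|\cdot|\cC_R|$; I would substitute $|\cC_1|=(q^{lk}-1)/(q^k-1)$ (since $\cC_1$ is a $k$-spread of $\F^{lk}$), $|\cC_R|=q^{n_2}$ (from the MRD bound~\eqref{e-rkdist} with $d=k$), and $|\cC_2|=m(n_2,k)$. Setting $c:=n_2\bmod k$, which coincides with $n\bmod k$ because $k\mid lk$, and combining the fractions over the common denominator $q^k-1$, the expression collapses to $(q^n-q^c)/(q^k-1)-q^c+1=m(n,k)$.

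For part (b), I would first note that Theorem~\ref{T-Linkage} yields $\ds(\cC)=2k$, so $\cC$ is already a partial $k$-spread. Then I would argue by contradiction: suppose $\cV\subseteq\F^n$ is a $k$-subspace with $\cV\notin\cC$ and $\cV\cap\cU=\{0\}$ for every $\cU\in\cC$, and identify $\F^n=\F^{lk}\times\F^{n_2}$, writing vectors as pairs $(v_1,w)$.

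The crux is the following covering claim: every $(v_1,w)\in\F^{lk}\times\F^{n_2}$ with $v_1\neq 0$ lies in some member of $\tilde{\cC}_1\cup\tilde{\cC}_3$. To prove it, I would use the spread property of $\cC_1$ to find the unique $U\in\cM_1$ with $v_1\in\im(U)$ and write $v_1=xU$ for some nonzero $x\in\F^k$. The evaluation map $\cC_R\to\F^{n_2}$, $M\mapsto xM$, is $\F$-linear; its kernel is trivial, since any $M\in\cC_R$ with $xM=0$ has $\rank(M)<k$ and must therefore be $0$ by $\dr(\cC_R)=k$. Since both source and target have cardinality $q^{n_2}$, the map is bijective, so some $M\in\cC_R$ satisfies $xM=w$, giving $(v_1,w)\in\im(U\mmid M)\in\tilde{\cC}_1\cup\tilde{\cC}_3$.

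With the covering claim in hand, the finish is mechanical. Any $v=(v_1,w)\in\cV$ with $v_1\neq 0$ would lie in some subspace of $\cC$ and hence equal $0$ by the trivial-intersection hypothesis---contradiction. Thus $\cV\subseteq\{0\}\times\F^{n_2}$, so $\cV=\{0\}\times\cW'$ for some $k$-subspace $\cW'\subseteq\F^{n_2}$. Since $\cV\notin\tilde{\cC}_2$, we have $\cW'\notin\cC_2$, and $\cV\cap(\{0\}\times\cW)=\{0\}$ for every $\cW\in\cC_2$ forces $\cW'\cap\cW=\{0\}$. Consequently $\cC_2\cup\{\cW'\}$ is a strictly larger partial $k$-spread in $\F^{n_2}$, contradicting the maximality of $\cC_2$. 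The step I expect to be the main obstacle is the covering claim itself: it is the single place where the hypotheses $\dr(\cC_R)=k$ and $|\cC_R|=q^{n_2}$, combined with the partition of $\F^{lk}$ by the spread $\cC_1$, conspire to push $\cV$ entirely into the second coordinate block---without any one of these ingredients, the argument collapses.
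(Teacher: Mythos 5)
Your proposal is correct and follows essentially the same route as the paper: part (a) is the same cardinality computation via Theorem~\ref{T-Linkage}, and part (b) rests on the same key observation that for fixed nonzero $x\in\F^k$ the map $M\mapsto xM$ is a bijection from $\cC_R$ onto $\F^{n_2}$, so every vector with nonzero first block is covered by $\tilde{\cC}_1\cup\tilde{\cC}_3$, with the remaining case handled by the maximality of $\cC_2$. The paper phrases (b) directly (every $k$-dimensional subspace meets some codeword) rather than by contradiction, but the content is identical.
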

\begin{proof}
(a) Theorem~\ref{T-Linkage} tells us that $|\cC|=q^{n_2}m(lk,k)+m(n_2,k)$. But this is easily seen to be~$m(n,k)$.
\\
(b) Let $(W_1\mid W_2)\in\F^{k\times(n_1+n_2)}$ be of rank~$k$ and set $\cW:=\im(W_1\mid W_2)$.
We have to show that there exists a subspace~$\cV\in\cC$ such that $\cW\cap\cV\neq\{0\}$, for then~$\cC$ is a maximal partial $k$-spread.
Assume first that~$W_1\neq0$.
Then there exists $(x,\,y)\in\cW$ such that $x\neq0$.
Since~$\cC_1$ is a spread of~$\F^{lk}$, the vector $x$ is in exactly one subspace of~$\cC_1$, say
$\im(U_1)$.
Let $x=\alpha U_1$, where $\alpha\in\F^k\backslash\{0\}$.
Since~$\cC_R$ is a linear rank-metric code with rank distance~$k$, we have $\alpha M\neq\alpha M'$ for all distinct $M,\,M'\in\cC_R$.
This shows that the set $\{\alpha M\mid M\in\cC_R\}$ has cardinality $|\cC_R|=q^{n_2}$ and therefore equals $\F^{n_2}$.
As a consequence, $y=\alpha M$ for some $M\in\cC_R$.
Hence $(x,y)=\alpha(U_1\mid M)$ and $\cW\cap\cV\neq\{0\}$ for the subspace $\cV=\im (U_1\mid M)\in\cC$.
Assume now $\cW=\im(0\mid W_2)$. Then $\rank(W_2)=k$ and the maximality of~$\cC_2$ implies $\cW\cap\cV\neq\{0\}$ for some
subspace $\cV=\im(0\mid U_2)\in\cC$.
All of this shows that~$\cC$ is a maximal partial $k$-spread in~$\F^n$.
\end{proof}

The following is an immediate consequence of Examples~\ref{E-EtVa11} and~\ref{E-GoRa14} because in both cases the chosen code~$\cC_2$
is trivially a maximal partial $k$-spread.
\begin{cor}\label{C-maximalPS}
The partial spreads constructed in~\cite[Thm.~11]{EtVa11} and in \cite[Thm.~13]{GoRa14}
are maximal.
\end{cor}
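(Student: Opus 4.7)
The plan is to reduce everything to Theorem~\ref{T-MaxParSpread}(b), noting that both constructions are presented in Examples~\ref{E-EtVa11} and~\ref{E-GoRa14} as linkage codes $\cC_1\ast_{\cC_R}\cC_2$ in which $\cC_2$ is a singleton $k$-dimensional subspace of $\F^{n_2}$ with $n_2 = k+c$, $c\in\{0,\ldots,k-1\}$. The goal is thus simply to verify the hypotheses of Theorem~\ref{T-MaxParSpread}(b) in both cases; the conclusion then follows at once.

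First I would check that the constituent codes satisfy the hypotheses of Theorem~\ref{T-MaxParSpread}(b). In both Example~\ref{E-EtVa11} and Example~\ref{E-GoRa14}, the code $\cC_1$ is a $k$-spread of $\F^{lk}$ (constructed via the orbit of $\F_{q^k}$ in $\F_{q^{lk}}$, respectively as a Desarguesian spread), and $\cC_R$ is a linear MRD code in $\F^{k\times n_2}$ of rank distance $k$ and cardinality $q^{n_2}$ as described in Remark~\ref{R-COC}. This matches the setup of Theorem~\ref{T-MaxParSpread} exactly.

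The key observation is then that the singleton $\cC_2=\{\cV_0\}$ is a maximal partial $k$-spread in~$\F^{n_2}$. Since $n_2 = k+c \leq 2k-1$, for any $k$-dimensional subspace $\cW\subseteq\F^{n_2}$ with $\cW\neq \cV_0$ we have
\[
    \dim(\cW\cap\cV_0) = 2k - \dim(\cW+\cV_0) \geq 2k - n_2 \geq 1,
\]
so $\cW$ cannot be added to $\cC_2$ without destroying the pairwise-trivial-intersection property. Hence $\cC_2$ is trivially maximal.

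With this in hand, Theorem~\ref{T-MaxParSpread}(b) applies verbatim, yielding that $\cC = \cC_1\ast_{\cC_R}\cC_2$ is a maximal partial $k$-spread in~$\F^n$ in each case. There is no real obstacle here; the only point that deserves explicit mention is the dimension count showing maximality of the singleton $\cC_2$, which relies on $n_2 < 2k$.
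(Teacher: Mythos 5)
Your proof is correct and follows the same route as the paper: the corollary is obtained by applying Theorem~\ref{T-MaxParSpread}(b) to the linkage descriptions in Examples~\ref{E-EtVa11} and~\ref{E-GoRa14}, the only point being that the singleton $\cC_2$ is trivially a maximal partial $k$-spread because $n_2=k+c<2k$. The paper states this without the explicit dimension count, which you have usefully spelled out.
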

Maximality of the partial spreads in~\cite{GoRa14} has also been established by Gorla/Ravagnani in~\cite[Prop.~20]{GoRa14}.

In Theorem~\ref{T-MaxPartSpread} we have seen that the maximum cardinality of a partial
$k$-spread in~$\F^n$ is known whenever $n\:(\mod\:k)\in\{0,1\}$.
There is one more case where the cardinality is known, and that is if $q=2$ and~$k=3$.
The following result covers all remainders of~$n$ modulo~$3$.

\begin{theo}[\mbox{\cite[Thm.~5]{EJSSS10}}]\label{T-P3SpreadOpt}
Let $k=3$ and $n\geq6$. Let $n\:(\mod\:3) =c$. Then the maximum cardinality of a partial 3-spread in~$\F_2^n$
is
\[
  \frac{2^n-2^c}{7}-c.
\]
We call a partial~$3$-spread with this cardinality a maximum partial $3$-spread.
\end{theo}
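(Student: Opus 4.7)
The plan is to split by the value $c = n \bmod 3 \in \{0,1,2\}$ and in each case to establish matching upper and lower bounds. The upper bound in every case comes from Theorem~\ref{T-MaxPartSpread}, while the lower bound is constructive: for $c \in \{0,1\}$ the bound in Theorem~\ref{T-MaxPartSpread} is already tight, and for $c = 2$ we combine the optimal partial 3-spread in $\F_2^8$ from~\cite{EJSSS10} with our linkage construction.

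For $c = 0$ and $c = 1$, plugging $q = 2$, $k = 3$ into the equality part of Theorem~\ref{T-MaxPartSpread} gives $\mu(n,3) = (2^n-1)/7$ and $\mu(n,3) = (2^n-2)/7 - 1$, respectively, which match the claimed formula $(2^n - 2^c)/7 - c$. So both cases are immediate and no new construction is needed.

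The case $c = 2$ is where the real work lies. For the upper bound I would apply the inequality part of Theorem~\ref{T-MaxPartSpread} with $q=2$, $k=3$, $c=2$: compute
\[
   \theta = \frac{\sqrt{1 + 4\cdot 8\cdot(8-4)} - (2\cdot 8 - 2\cdot 4 + 1)}{2} = \frac{\sqrt{129} - 9}{2},
\]
which lies in $(1,2)$, hence $\lfloor\theta\rfloor = 1$ and $\mu(n,3) \leq (2^n-4)/7 - 2$. For the matching lower bound I would use induction on $n$ in steps of $3$. The base case is $n = 8$, where the optimal partial $3$-spread of size $34 = (2^8-4)/7-2$ constructed by El-Zanati et al.\ in~\cite{EJSSS10} gives the required code and does not need to be reproved. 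For the inductive step, let $n \geq 11$ with $n \equiv 2 \pmod 3$ and write $n = n_1 + n_2$ with $n_2 = 8$ and $n_1 = n - 8$, so $3 \mmid n_1$. Choose $\cC_1$ to be a $3$-spread in $\F_2^{n_1}$ (e.g.\ from Remark~\ref{R-Spread}), so $|\cC_1| = (2^{n_1}-1)/7$ and $\ds(\cC_1) = 6$; choose $\cC_2$ to be the optimal partial $3$-spread in $\F_2^8$ of size $34$; and choose $\cC_R \subseteq \F_2^{3\times 8}$ to be an MRD code of rank distance $3$, so $|\cC_R| = 2^8$ by~\eqref{e-rkdist}. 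Theorem~\ref{T-Linkage} then produces a subspace code $\cC = \cC_1 \ast_{\cC_R} \cC_2$ of dimension $3$ and distance $\min\{6,6,6\} = 6 = 2k$, which is therefore a partial $3$-spread in $\F_2^n$. Its cardinality is
\[
   |\cC| = |\cC_2| + |\cC_R|\cdot|\cC_1| = 34 + 2^8\cdot\frac{2^{n_1}-1}{7} = \frac{238 + 2^n - 2^8}{7} = \frac{2^n - 18}{7} = \frac{2^n - 4}{7} - 2,
\]
matching the upper bound.

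The main obstacle is really just making sure that the base case $n = 8$ is genuinely needed and cannot be bypassed by linkage alone: attempting to split $n = 8 = 3+5$ or $n = 8 = 6 + 2$ fails, because there is no partial $3$-spread in $\F_2^{n_2}$ of the required maximal size for these small $n_2$. This is precisely why the input from~\cite{EJSSS10} is essential, and once it is accepted as the base case, the remaining argument is routine bookkeeping built out of Theorem~\ref{T-Linkage}, Theorem~\ref{T-MaxPartSpread}, and Remark~\ref{R-Spread}.
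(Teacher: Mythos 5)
Your proposal is correct, and it assembles the proof from exactly the ingredients the paper has on hand. Note, though, that the paper itself does not prove this statement: it is quoted from \cite[Thm.~5]{EJSSS10}, whose original proof extends the $n=8$ example to larger $n$ via a partition result of Bu (\cite[Lem.~4]{Bu80}). What you have written is essentially the ``alternative'' route the paper sketches around the theorem. Your $c\in\{0,1\}$ cases and all upper bounds are imported from Theorem~\ref{T-MaxPartSpread} (your computation $\theta=(\sqrt{129}-9)/2\in(1,2)$, hence $\lfloor\theta\rfloor=1$ and $\mu(n,3)\le(2^n-4)/7-2$ for $c=2$, is a detail the paper leaves implicit but is correct), and your $c=2$ achievability argument is precisely the $n_2=8$ instance of the paper's Corollary~\ref{C-Opt3Spread}, anchored at the computer-found size-$34$ spread of Example~\ref{E-P3Spread}. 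There is no circularity, since only the concrete $n=8$ object, not the theorem itself, is borrowed from \cite{EJSSS10}. Two cosmetic remarks: what you call an induction in steps of $3$ is really a direct construction (the step from $n$ uses a $3$-spread in $\F_2^{n-8}$ and the fixed base $n=8$, not the case $n-3$); and since $n\equiv 2\pmod 3$ with $n\ge 6$ forces $n\in\{8,11,14,\dots\}$, your base case plus the construction for $n\ge 11$ do cover everything. The cardinality check $34+2^8(2^{n-8}-1)/7=(2^n-18)/7=(2^n-4)/7-2$ is correct.
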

Note that for $c\in\{0,1\}$ the result is simply a special case of Theorem~\ref{T-MaxParSpread}, whereas for~$c=2$
the cardinality $m(n,k)$ in~\eqref{e-cardmin} is one below the maximum.
As a consequence, the constructions in~\cite[Thm.~11]{EtVa11} and~\cite[Thm.~13]{GoRa14} are just one
subspace short of being maximum.

The proof of Theorem~\ref{T-P3SpreadOpt} is based on a concrete example for $n=8$ and an extension construction for $n>8$.
It makes use of a result in~\cite[Lem.~4]{Bu80}, which establishes a partition of~$\F_q^n$ into subspaces
of two distinct dimensions.
Below we will provide an alternative extension, where we will also make use of the maximum $3$-spread in~$\F_2^8$.
\begin{exa}[\mbox{\cite[Ex.~2]{EJSSS10}}]\label{E-P3Spread}
There exists a partial 3-spread in~$\F_2^8$ with cardinality~$34$. Hence the spread is maximum.
It has been found by computer search and is explicitly given in~\cite{EJSSS10}.
\end{exa}

Now we can provide a simple construction of maximum partial $3$-spreads in~$\F_2^n$ for any~$n\geq10$.
Note that, due to the previous example and earlier discussions,
a maximum partial $3$-spread in~$\F_2^n$ is available for the values $n\in\{6,7,8,9\}$.

\begin{cor}\label{C-Opt3Spread}
Let $n\geq 10$ and write $n=3l+n_2$ for some $l\geq1$ and $n_2\in\{6,\,7,\,8\}$.
Choose a 3-spread~$\cC_1$ in $\F_2^{3l}$ and a maximum partial
3-spread~$\cC_2$ in~$\F_2^{n_2}$.
Finally, let~$\cC_R$ be an MRD code with rank distance~$3$ in~$\F_2^{3\times n_2}$.
Then $\cC_1\ast_{\cC_R}\cC_2$ is a maximum partial $3$-spread in~$\F_2^n$.
\end{cor}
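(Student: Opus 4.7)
The plan is to apply Theorem~\ref{T-Linkage} directly, verifying first that the subspace distance of the linkage equals $2k=6$ (so that the resulting code is indeed a partial $3$-spread), and then computing that its cardinality matches the optimum given in Theorem~\ref{T-P3SpreadOpt}.

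First I would check that all three distance contributions in Theorem~\ref{T-Linkage} are $6$. Since $\cC_1$ is a $3$-spread in $\F_2^{3l}$ we have $\ds(\cC_1)=6$; since $\cC_2$ is a partial $3$-spread in $\F_2^{n_2}$ we have $\ds(\cC_2)=6$; and since $\cC_R$ is an MRD code in $\F_2^{3\times n_2}$ with rank distance $3$ we have $2\dr(\cC_R)=6$. Theorem~\ref{T-Linkage} then gives $\ds(\cC_1\ast_{\cC_R}\cC_2)=6$, so the linkage is a constant-dimension code of dimension $3$ and distance $2k$, i.e.\ a partial $3$-spread in~$\F_2^n$. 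I would also note that the ingredients all exist for $n_2\in\{6,7,8\}$: maximum partial $3$-spreads in $\F_2^{n_2}$ are provided by the classical constructions for $n_2\in\{6,7\}$ (cf.\ Theorem~\ref{T-MaxPartSpread}) and by Example~\ref{E-P3Spread} for $n_2=8$, while an MRD code of rank distance $3$ in $\F_2^{3\times n_2}$ is supplied by Remark~\ref{R-COC}.

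Next I would compute the cardinality. Write $c=n\bmod 3$; since $3l\equiv 0\pmod 3$ we also have $n_2\bmod 3=c$. Theorem~\ref{T-Linkage} gives
\[
 |\cC_1\ast_{\cC_R}\cC_2|=|\cC_2|+|\cC_1|\cdot|\cC_R|
 =\Big(\frac{2^{n_2}-2^c}{7}-c\Big)+\frac{2^{3l}-1}{7}\cdot 2^{n_2},
\]
using $|\cC_1|=(2^{3l}-1)/7$ (since $\cC_1$ is a $3$-spread), $|\cC_2|=(2^{n_2}-2^c)/7-c$ (by Theorem~\ref{T-P3SpreadOpt}), and $|\cC_R|=2^{n_2}$ (MRD code in $\F_2^{3\times n_2}$ with rank distance~$3$). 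A direct algebraic simplification collapses the two fractions to $(2^n-2^c)/7-c$, which by Theorem~\ref{T-P3SpreadOpt} is exactly the maximum possible cardinality of a partial $3$-spread in $\F_2^n$.

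There is essentially no obstacle: the argument is just an application of Theorem~\ref{T-Linkage} together with a short arithmetic check. The only step requiring any care is the cardinality computation, where one must be careful to match the residue $c$ at length $n_2$ with the residue at length $n$; this is immediate because the extra length $3l$ is a multiple of~$k=3$.
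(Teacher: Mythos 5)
Your proof is correct and follows essentially the same route as the paper: the paper likewise observes that the linkage is automatically a partial $3$-spread and then verifies via Theorems~\ref{T-Linkage} and~\ref{T-P3SpreadOpt} that its cardinality $2^{n_2}\frac{2^{3l}-1}{7}+\frac{2^{n_2}-2^c}{7}-c$ simplifies to the optimum $\frac{2^n-2^c}{7}-c$. Your additional remarks on the existence of the ingredients and the explicit distance check are fine but not needed beyond what the paper records.
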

\begin{proof}
The resulting code is certainly a partial spread.
Let~$n\:(\mod\:3)=c$, thus $n_2\:(\mod\:3)=c$.
By Theorems~\ref{T-P3SpreadOpt} and~\ref{T-Linkage}  the cardinality of $\cC_1\ast_{\cC_R}\cC_2$ is given by
\[
  2^{n_2}\frac{2^{3l}-1}{7}+\frac{2^{n_2}-2^c}{7}-c=\frac{2^n-2^c}{7}-c,
\]
and this is the maximum value due to Theorem~\ref{T-P3SpreadOpt}.
\end{proof}

\section{Decoding of Linkage Codes}\label{S-Decoding}
In this section we turn to decodability of the linkage codes from Theorem~\ref{T-Linkage}.
Of course, one aims at reducing decoding of $\cC_1\ast_{\cC_R}\cC_2$ to decoding of the smaller codes
$\cC_1,\,\cC_2,\,\cC_R$.
We will show first that this strategy does not work if one utilizes the rank metric for the code~$\cC_R$.
Instead one has to employ the subspace distance for all codes involved.
We will show that if we use suitable MRD codes and liftings thereof, then decoding can indeed be reduced to
decoding of the constituent codes.
Since lifted Gabidulin codes can be efficiently decoded, as proven by Silva et al.~\cite{SKK08}, this leads
to an efficient decoding algorithm for a particular instance of linkage codes.

The following terminology is standard.

\begin{defi}\label{D-decodable}
Let~$\cC$ be a subspace code in~$\F^n$ with subspace distance~$d$.
A subspace~$\cV\subseteq\F^n$ is called \emph{decodable} w.r.t.~$\cC$ if there exists a subspace $\cU\in\cC$ such that
$\ds(\cU,\,\cV)\leq\frac{d-1}{2}$.
\end{defi}
Since the subspace distance is a metric on the set of all subspaces in~$\F^n$ (see~\cite[Lem.~1]{KoKsch08}), a decodable subspace has a unique
closest codeword in~$\cC$.

The following simple fact will be useful later.
\begin{rem}\label{R-UcapV}
Let~$\cC$ be a constant-dimension code in~$\F^n$ with dimension~$k$ and subspace distance~$d$.
Suppose~$\cV\subseteq\F^n$ is a decodable $K$-dimensional subspace and~$\cU\in\cC$ is the unique closest codeword.
Then
\[
    \dim(\cU\cap\cV)>K/2\ \text{ and }\ \dim(\cU+\cV)<k+K/2.
\]
The first inequality follows from $\ds(\cU,\,\cV)\leq(d-1)/2<k$, see~\eqref{e-dmax}, which then reads as
$K+k-2\dim(\cU\cap\cV)<k$.
The second inequality is obtained by using $\dim(\cU+\cV)=k+K-\dim(\cU\cap\cV)$.
\end{rem}

We start with an example illustrating that the rank distance of the code~$\cC_R$ cannot be used in the natural way for decoding
the linkage code $\cC_1\ast_{\cC_R}\cC_2$.

Throughout this section we call a matrix~$V\in\F^{K\times n}$ a \emph{matrix representation} of the subspace~$\cV\subseteq\F^n$ if
$\cV=\im(M):=\{xM\mid x\in\F^K\}$. We explicitly allow $\dim(\cV)<K$.

\begin{exa}\label{E-BadDec}
Let $n_1=n_2=8,\,k=4$ and $\F=\F_2$. Define
\[
  \cC_1=\cC_2=\{\im(I\mid I),\,\im(I\mid 0)\}.
\]
where $I$ and~$0$ are the identity and the zero matrix in~$\F^{4\times4}$, respectively.
Moreover, let
\[
  \cC_R=\big\{0_{4\times 8},\, (I\mid0),\,(M\mid 0),\,(I+M\mid 0)\big\}, \text{ where } M=\begin{pmatrix}0&1&0&0\\0&0&1&0\\0&0&0&1\\1&1&0&0\end{pmatrix}.
\]
Then $\ds(\cC_1)=\ds(\cC_2)=8$ and $d_R:=\dr(\cC_R)=4$.
Thus the linkage code~$\cC=\cC_1\ast_{\cC_R}\cC_2$ has length~$16$, subspace distance~$d=8$ and cardinality $N=10$.
Consider the received word
\[
  \cV=\im\left(\begin{array}{cccccccc|cccccccc}1&1&0&0&1&1&0&0&0&1&1&0&0&0&0&0\\0&1&1&0&0&1&1&0&0&0&1&1&0&0&0&0\end{array}\right).
\]
Then $\ds(\cV,\,\cU)=2$ for the codeword $\cU=\im(U_1\mid U_2)\in\cC$, where
\[
  U_1=(I\mid I)\text{ and }U_2=(M\mid 0).
\]
In particular,~$\cV$ is decodable.
Note that $\cV\subseteq\cU$ (i.e., only erasures occurred during the transmission).
One can check straightforwardly that there exists no matrix representation $(V_1\mid V_2)$  of~$\cV$ in $\F^{4\times16}$
such that $\dr(V_2,\,U_2)\leq \frac{d_R-1}{2}=\frac{3}{2}$.
Even worse, for all matrix representations $(V_1\mid V_2)\in\F^{4\times16}$ of~$\cV$ for which the matrix~$V_2$ has a unique closest matrix in~$\cC_R$ with respect
to the rank distance, this unique closest matrix is the zero matrix and therefore does not lead to the correct decoding~$\cU$.
The fact that the ``obvious decoding'' does not work may be explained by the fact that the subspaces represented by the
nonzero matrices in~$\cC_R$, i.e., $\im(I\mid0),\,\im(M\mid 0),\,\im(I+M\mid 0)$, all coincide.
This causes the rank-metric code~$\cC_R$ to be of little help with decoding.
\end{exa}

The example can be generalized.
We introduce the following notation.
Define the projections
\begin{equation}\label{e-proj}
  \pi_1:\F^{n_1+n_2}\longrightarrow \F^{n_1},\ (a,b)\longmapsto a \quad \text{and}\quad
  \pi_2:\F^{n_1+n_2}\longrightarrow \F^{n_2},\ (a,b)\longmapsto b.
\end{equation}
For a subspace $\cA\subseteq\F^n$ we define $\cA_i=\pi_i(\cA)$.
Thus, if $\cA=\im(A_1\mmid A_2)$, then $\cA_i=\im(A_i)$ for $i=1,2$.

\begin{prop}\label{P-BadDec}
Let~$\cC$ be as in Theorem~\ref{T-Linkage} and assume $d\geq d_R+2$.
Then there exists a subspace $\cU=\im(U_1\mmid U_2)\in\cC$ and a received word~$\cV\subseteq\F^n$ such that
\begin{alphalist}
\item $\ds(\cU,\cV)\leq\frac{d-1}{2}$ (that is,~$\cV$ is decodable),
\item $\cV\subseteq\cU$ (hence only erasures occurred during transmission),
\item for any $V_2\in\F^{k\times n_2}$ such that $\im(V_2)=\pi_2(\cV)$ we have
      \[
          \rank(V_2-U_2)>\frac{d_R-1}{2}.
      \]
      In other words, it is not possible to decode~$\cV$ by making use of the rank metric for the code~$\cC_R$.
\end{alphalist}
\end{prop}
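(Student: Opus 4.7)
The plan is to exhibit a concrete codeword $\cU \in \tilde{\cC}_3$ together with a subspace $\cV \subseteq \cU$ whose second projection $\pi_2(\cV)$ is strictly smaller than $\im(U_2)$; this is exactly the obstruction already visible in Example~\ref{E-BadDec}.

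I would start by picking a minimum-rank element $M \in \cC_R \setminus \{0\}$, so that $\rank(M) = d_R$, which exists by linearity of $\cC_R$. Choose any $U \in \cM_1$ and set $\cU = \im(U \mmid M) \in \tilde{\cC}_3$. Next, let $\cN = \{x \in \F^k : xM = 0\}$ (of dimension $k - d_R$), put $s := \lceil d_R/2 \rceil$, and pick any subspace $X \subseteq \F^k$ containing $\cN$ with $\dim X = k - s$. Define $\cV := \{x(U \mmid M) : x \in X\}$. Since $U$ has full row rank this gives $\dim \cV = k - s$ and $\cV \subseteq \cU$, which immediately yields~(b), and $\ds(\cU, \cV) = s$ which I will use for~(a).

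The key step is~(c). Because $\cN \subseteq X$, the map $x \mapsto xM$ restricted to $X$ has kernel $\cN$, so $\dim \pi_2(\cV) = \dim(XM) = d_R - s$. Hence any $V_2 \in \F^{k \times n_2}$ with $\im(V_2) = \pi_2(\cV)$ has $\rank(V_2) = d_R - s$, and subadditivity of rank forces
\[
   \rank(V_2 - M) \geq \rank(M) - \rank(V_2) = s.
\]
Since $s = \lceil d_R/2 \rceil > (d_R - 1)/2$, this is the desired inequality in~(c).

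The only calibration issue is that $s$ must satisfy $s \leq (d-1)/2$ for~(a) and $s > (d_R-1)/2$ for~(c). A short parity case analysis shows that both constraints are compatible precisely when $d \geq d_R + 2$, which is exactly the hypothesis, so I do not anticipate any real obstacle.
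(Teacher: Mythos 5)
Your construction is correct, and for parts (a) and (b) it has the same skeleton as the paper's proof: both take a codeword $\cU$ and a subspace $\cV\subseteq\cU$ of codimension $\lceil d_R/2\rceil$, so that $\ds(\cU,\cV)=\lceil d_R/2\rceil\leq(d-1)/2$. Where you genuinely diverge is in part (c). The paper works with any codeword satisfying $\rank(U_2)\geq\lceil d_R/2\rceil$, chooses the discarded $k-r$ rows so that $U_{22}$ has full row rank and $\im(U_{21})\cap\im(U_{22})=\{0\}$, then parametrizes \emph{all} $k\times n_2$ matrices $V_2$ with row space $\pi_2(\cV)$ and shows by a block computation that a sub-block of $X(V_2-U_2)$ has full row rank $k-r$. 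You instead insist that $U_2=M$ have \emph{minimal} rank $d_R$ and that $\cV$ absorb the entire left kernel of $M$, which forces $\dim\pi_2(\cV)=d_R-\lceil d_R/2\rceil=\lfloor d_R/2\rfloor$; part (c) then follows from nothing more than rank subadditivity, $\rank(V_2-M)\geq\rank(M)-\rank(V_2)=\lceil d_R/2\rceil$. Your route is more elementary and sidesteps the explicit parametrization of matrix representations entirely, at the harmless cost of a more special choice of $\cU$ (the paper's version shows the failure is generic, not tied to minimum-rank elements of $\cC_R$). One small nit: your closing claim that the two constraints are compatible ``precisely when'' $d\geq d_R+2$ overstates things --- for even $d_R$ the requirement $\lceil d_R/2\rceil\leq(d-1)/2$ already holds when $d=d_R+1$ --- but the stated hypothesis certainly suffices, which is all you need.
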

\begin{proof}
Let $d\geq d_R+2$. Since $d\leq d_1\leq 2k$, we have $r:=k-\lceil\frac{d_R}{2}\rceil\geq\frac{1}{2}$, and
thus $r\geq1$ because it is an integer.
We construct now subspaces~$\cU$ and~$\cV$ as stated in the proposition.
First choose a subspace $\cU=\im(U_1\mmid U_2)\in\cC$ with $(U_1\mmid U_2)\in\F^{k\times n}$ such that
$\rank(U_2)\geq k-r=\lceil\frac{d_R}{2}\rceil$.
By definition of~$\cC$ such an element does indeed exist.
Next, there exists a matrix $X\in\GL_k(\F)$ such that
\[
    X(U_1\mmid U_2)=\begin{pmatrix}U_{11}\!\!\!&\mid\, U_{21}\\ U_{12}\!\!\!&\mid\, U_{22}\end{pmatrix}
\]
and where $U_{22}\in\F^{(k-r)\times n_2}$ has rank~$k-r$ and and $\im(U_{21})\cap\im(U_{22})=\{0\}$.
Put
\[
   \cV=\im(U_{11}\mmid U_{21}).
\]
Then $\dim(\cV)=r$ because the rows of~$(U_1\mmid U_2)$ are linearly independent.
Moreover,~$\cV\subseteq\cU$ and therefore
\[
  \ds(\cV,\cU)=k+r-2r=\big\lceil\frac{d_R}{2}\big\rceil\leq \frac{d-1}{2}.
\]
This establishes~(a) and~(b).
For~(c) consider now all matrices~$V_2$ in $\F^{k\times n_2}$ whose row space is $\pi_2(\cV)$.
These matrices can be written as

\[
    V_2:=X^{-1}\begin{pmatrix}M_1\\M_2\end{pmatrix}U_{21},\text{ where }
         \begin{pmatrix}M_1\\M_2\end{pmatrix}\in\F^{k\times r} \text{ is any matrix of rank }r.
\]
The matrix~$X^{-1}$ does not change the row space, and we include it only to simplify the next step.
Indeed, for each such matrix~$V_2$ we have
\[
   \rank(V_2-U_2)=\rank(XV_2-XU_2)=\rank\begin{pmatrix}M_1U_{21}-U_{21}\\M_2U_{21}-U_{22}\end{pmatrix}
        \geq \rank(M_2U_{21}-U_{22}).
\]
The rightmost matrix has full row rank,~$k-r$.
Indeed, suppose $u(M_2U_{21}-U_{22})=0$.
Then $uM_2U_{21}=uU_{22}\in\im(U_{21})\cap\im(U_{22})$.
Since this intersection is trivial, we obtain $uU_{22}=0$, which in turn implies $u=0$.
All of this shows that $\rank(U_2-V_2)\geq \lceil\frac{d_R}{2}\rceil>\frac{d_R-1}{2}$ for all matrix representations of~$\pi_2(\cV)$.
\end{proof}

The last observation suggests to modify the linkage construction by simply replacing the rank-metric
code~$\cC_R$ by matrix representations of a subspace code.
This results in a code that is decodable if its constituent codes are decodable.
But since these codes are considerably smaller than the original linkage codes, we will not follow that path.

Instead, we will show now how to decode linkage codes $\cC_1\ast_{\cC_R}\cC_2$ for the case where~$\cC_1$ and~$\cC_R$ are (lifted) MRD codes.
We need the following lemma.

\begin{lemma}\label{L-V1U1}
Let~$\cC$ be the code from Theorem~\ref{T-Linkage} and $\cV\subseteq\F^n$ be a decodable $K$-dimensional subspace.
Let $\cU\in\cC$ be the closest codeword, thus $\ds(\cU,\,\cV)\leq \frac{d-1}{2}$.
Then
\[
   \cU_1=0\Longleftrightarrow \dim\cV_1\leq K/2.
\]
\end{lemma}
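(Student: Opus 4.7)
The plan is to exploit the dichotomy among codewords of $\cC$: elements of $\tilde{\cC}_2$ have the form $\im(0\mid U)$ and hence satisfy $\cU_1=0$, while elements of $\tilde{\cC}_1\cup\tilde{\cC}_3$ have the form $\im(U\mid\ast)$ with $U\in\cM_1$ of full row rank~$k$, so that $\dim\cU_1=k=\dim\cU$ and the projection $\pi_1$ restricted to $\cU$ is an isomorphism onto $\cU_1$. Throughout, I will apply Remark~\ref{R-UcapV}, which supplies the key inequality $\dim(\cU\cap\cV)>K/2$ for the closest codeword~$\cU$ of the decodable subspace~$\cV$.

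For the implication $\cU_1=0\Rightarrow\dim\cV_1\leq K/2$, I will observe that $\cU\subseteq\{0\}\times\F^{n_2}$, so $\cU\cap\cV$ is contained in the kernel of $\pi_1|_\cV$, whose dimension is $K-\dim\cV_1$. Combining this containment with Remark~\ref{R-UcapV} gives $K-\dim\cV_1\geq\dim(\cU\cap\cV)>K/2$, and hence $\dim\cV_1<K/2$. For the reverse implication I will argue contrapositively: if $\cU_1\neq 0$, then $\pi_1|_\cU$ is injective, so its restriction to $\cU\cap\cV$ is also injective and maps into $\cU_1\cap\cV_1$. This yields $\dim\cV_1\geq\dim(\cU_1\cap\cV_1)\geq\dim(\cU\cap\cV)>K/2$, contradicting $\dim\cV_1\leq K/2$.

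The only point that deserves care is that the injectivity of $\pi_1$ on $\cU$ transfers faithfully to the intersection $\cU\cap\cV$ and that $\pi_1(\cU\cap\cV)$ really lands inside $\cU_1\cap\cV_1$ (not merely in $\cU_1$); once these dimension identifications are made, the proof reduces to elementary bookkeeping driven entirely by the lower bound $\dim(\cU\cap\cV)>K/2$ from Remark~\ref{R-UcapV}. I do not anticipate any serious obstacle beyond pairing each case of the dichotomy with the correct inequality in the conclusion.
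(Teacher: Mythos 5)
Your proof is correct and follows essentially the same route as the paper's: both directions rest on Remark~\ref{R-UcapV} together with the observation that $U_1$ is either zero or of full row rank $k$ (making $\pi_1|_{\cU}$ injective). The only cosmetic difference is that in the forward direction you invoke the intersection bound $\dim(\cU\cap\cV)>K/2$ via the kernel of $\pi_1|_{\cV}$, whereas the paper uses the equivalent sum bound $\dim(\cU+\cV)<k+K/2$ via a block-matrix rank computation.
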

\begin{proof}
``$\Longleftarrow$''
Assume $\cU_1\neq 0$. Then $\rank(U_1)=k$ by definition of~$\cC$.
Thus $\dim(\cU)=k=\dim(\cU_1)$, and the map $\pi_1|_{\cU}$ is injective, where~$\pi_1$ is the projection from~\eqref{e-proj}.
We compute
\[
  \dim(\cU\cap\cV)=\dim(\pi_1(\cU\cap\cV))\leq\dim(\pi_1(\cU)\cap\pi_1(\cV))\leq\dim(\pi_1(\cV))=\dim\cV_1\leq K/2,
\]
which is a contradiction to Remark~\ref{R-UcapV}.
Thus $\cU_1=0$.
\\
``$\Longrightarrow$''
Let $\cU_1=0$, thus $\cU=\im(0\mid U_2)$ and $\rank(U_2)=k$ by definition of~$\cC$.
Write $\cV=\im(V_1\mmid V_2)$ for some $(V_1\mmid V_2)\in\F^{K\times n}$.
With the aid of Remark~\ref{R-UcapV} we obtain
\[
  \rank(V_1)+k\leq\rank\begin{pmatrix}0&U_2\\V_1&V_2\end{pmatrix}=\dim(\cU+\cV)<k+\frac{K}{2}.
\]
Hence $\dim\cV_1=\rank(V_1)<K/2$.
\end{proof}
Note that the implication ``$\Longleftarrow$'' of the last lemma is in general not true for~$\cU_2$ and~$\cV_2$
because the matrices in~$\cC_R$ may not have rank~$k$.

Now we are in the position to discuss decoding of the linkage codes from Theorem~\ref{T-Linkage}.
We consider the following situation which is a special case of the general linkage construction.
It may also be regarded as an extension of the codes considered in Example~\ref{E-MRDLinkage}.
\begin{theo}\label{T-LinkMRD}
For $i=1,\,2$ let $n_i\geq k$  and let $\cM_i\subseteq\F^{k\times n_i}$ be a linear MRD code with rank distance~$d$, thus $|\cM_i|=q^{n_i(k-d+1)}$.
Moreover, let $\cM_3\subseteq\F^{k\times n_1}$ and $\cM_4\subseteq\F^{k\times n_2}$ be  SC-representing sets of constant-dimension codes with subspace distance~$2d$.
Consider the code $\cC=\cC'\cup\cC''\cup\cC'''$, where
\begin{align*}
  \cC'&=\{\im(I_k\mmid M_1\mmid M_2)\mid M_1\in\cM_1,\, M_2\in\cM_2\},\\
  \cC''&=\{\im(0_{k\times k}\mmid M\mmid 0_{k\times n_2})\mid M\in\cM_3\},\\
  \cC'''&=\{\im(0_{k\times k}\mmid  0_{k\times n_1}\mmid M)\mid M\in\cM_4\}.
\end{align*}
Then $\cC$ is an $(n,\,N,\,k,\,2d)$-code, where $n=k+n_1+n_2$ and $N=q^{(n_1+n_2)(k-d+1)}+|\cM_3|+|\cM_4|$.
\end{theo}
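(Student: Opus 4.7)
My plan is to verify the three ingredients of the statement directly: every listed subspace has dimension~$k$; the three families $\cC',\cC'',\cC'''$ are pairwise disjoint with distinct elements within each (giving the claimed cardinality~$N$); and the minimum subspace distance equals~$2d$.

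The dimension and cardinality assertions are short. Every generator matrix has full row rank~$k$ (the identity block in $\cC'$; the SC-representing property for $\cC''$ and $\cC'''$). For disjointness, the projection onto the first~$k$ coordinates equals $\F^k$ for subspaces in $\cC'$ but is $\{0\}$ for those in $\cC''\cup\cC'''$; and $\cC''$ versus $\cC'''$ live in complementary coordinate slabs, so a common element would force $k=0$. Within $\cC'$, the identity block guarantees uniqueness of the reduced row echelon form, so distinct pairs $(M_1,M_2)\in\cM_1\times\cM_2$ give distinct row spaces; distinctness inside $\cC''$ and $\cC'''$ is the SC-representing property. This yields $|\cC'|=|\cM_1|\,|\cM_2|=q^{(n_1+n_2)(k-d+1)}$ and the stated value of~$N$.

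For the subspace distance I would perform a case analysis on which of the three families contains $\cU$ and $\cV$. The substantive case is $\cU,\cV\in\cC'$: writing $\cU=\im(I_k\mmid M_1\mmid M_2)$ and $\cV=\im(I_k\mmid M_1'\mmid M_2')$ with $(M_1,M_2)\neq(M_1',M_2')$, the identity block forces every vector in $\cU\cap\cV$ to be of the form $\alpha(I_k\mmid M_1\mmid M_2)$ for one and the same $\alpha\in\F^k$, and then $\alpha(M_1-M_1')=0$ and $\alpha(M_2-M_2')=0$. By linearity of the MRD codes $\cM_1$ and $\cM_2$ at least one of the differences $M_j-M_j'$ is a nonzero element of $\cM_j$, hence has rank $\geq d$. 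Therefore $\alpha$ lies in the left null space of a matrix of rank at least~$d$, so $\dim(\cU\cap\cV)\leq k-d$ and $\ds(\cU,\cV)\geq 2d$. All remaining cases are immediate: distances inside $\cC''$ and inside $\cC'''$ are $\geq 2d$ by hypothesis, while every mixed pair involving $\cC''$ or $\cC'''$ produces a trivial intersection (the identity block or the disjoint coordinate slabs force $\alpha=0$), yielding $\ds=2k\geq 2d$.

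To see that $2d$ is attained, I would take $\cU=\im(I_k\mmid 0\mmid 0)$ and $\cV=\im(I_k\mmid M_1\mmid 0)$ for any $M_1\in\cM_1$ of rank exactly~$d$ (such $M_1$ exists since $\cM_1$ is a linear code with rank distance~$d$); then $\dim(\cU\cap\cV)=\dim\ker M_1=k-d$ and $\ds(\cU,\cV)=2d$. The only mildly delicate point is the main case, which really amounts to the observation that the concatenation $\{(A\mmid B):A\in\cM_1,\,B\in\cM_2\}$ inherits a linear rank-metric structure of rank distance~$d$; this is a direct consequence of the linearity and rank-distance hypotheses on $\cM_1$ and $\cM_2$, and I do not foresee any further obstacle.
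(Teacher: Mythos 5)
Your proof is correct, but it takes a genuinely different route from the paper's. The paper proves this theorem in one line by recognizing $\cC$ as a linkage code in the sense of Theorem~\ref{T-Linkage}: it takes $\tilde{\cC}_1=\{\im(I_k)\}$ (a single subspace of $\F^k$), $\tilde{\cC}_2=\{\im(M\mmid 0)\mid M\in\cM_3\}\cup\{\im(0\mmid M)\mid M\in\cM_4\}$, and the concatenated MRD code $\tilde{\cC}_R=\{(M_1\mmid M_2)\mid M_i\in\cM_i\}$, and then cites the general theorem. You instead verify cardinality and distance from scratch; the core computation is the same, since your main case ($\cU,\cV\in\cC'$: the identity block forces $\alpha=\beta$, and then some $M_j-M_j'$ is a nonzero codeword of rank at least~$d$) is exactly the ``$U=U'$'' case in the proof of Theorem~\ref{T-Linkage}, and your closing remark that $\{(A\mmid B)\mid A\in\cM_1,\,B\in\cM_2\}$ is a linear rank-metric code of distance~$d$ is precisely the paper's choice of $\tilde{\cC}_R$. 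What the reduction buys the paper is brevity and a structural identification that is reused in the decoding analysis of Theorem~\ref{T-DecLinkGab}. What your direct argument buys is two points the one-line reduction glosses over: first, a completely literal application of Theorem~\ref{T-Linkage} with the singleton $\tilde{\cC}_1\subseteq\F^k$ runs into the paper's convention $\ds(\tilde{\cC}_1)=\min\{2k,2(k-k)\}=0$, so one has to observe that the $d_1$ term in $\min\{d_1,d_2,2d_R\}$ is vacuous when $\cM_1$ is a singleton -- your argument sidesteps this entirely; second, you explicitly exhibit a pair of codewords at distance exactly~$2d$, so the subspace distance is $2d$ and not merely bounded below by it.
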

\begin{proof}
This is a simple application of Theorem~\ref{T-Linkage} : $\cC=\tilde{\cC}_1\ast_{\tilde{\cC}_R}\tilde{\cC}_2$ with the codes
$\tilde{\cC}_1=\{\im(I_k)\}$, $\tilde{\cC}_2=\{\im(M\mmid 0)\mid M\in\cM_3\}\cup\{\im(0\mid M)\mmid M\in\cM_4\}$, and
$\tilde{\cC}_R=\{(M_1\mid M_2)\mid M_i\in\cM_i\}$.
\end{proof}

Note that if~$\cM_4$ represents a lifted MRD code, then $\cC'\cup\cC'''=\cC_1\ast_{\cM_2}\cC_2$, where
$\cC_1=\{\im(I\mmid M_1)\mid M_1\in\cM_1\},\,\cC_2=\{\im(M)\mid M\in\cM_4\}$.
Hence the code is of the form as discussed in Example~\ref{E-MRDLinkage}, and in Theorem~\ref{T-LinkMRD} we improve upon the codes
in that example by the size of~$\cM_3$.
In the column ``Link$_{\text{MRD}}$'' of the table in Example~\ref{E-MRDLinkage} we listed, for a specific choice of
parameters, the largest codes of the form $\cC'\cup\cC'''$ above.
As we saw already, the largest size is attained when~$n_2$ is largest subject to $n=k+n_1+n_2$ with $n_1+k\geq 2k$ and $n_2\geq 2k$
(now $n_1+k$ takes the role of~$n_1$ from that example), thus for $n_1=k$ and $n_2=n-2k$.
But in that case~$|\cM_3|=1$ and thus we improve upon the codes in that table by exactly one subspace.

We now turn to decoding of the codes in Theorem~\ref{T-LinkMRD}.
As we show next this can be reduced to decoding of the constituent codes.

\begin{theo}\label{T-DecLinkGab}
Consider the setting of Theorem~\ref{T-LinkMRD}.
To ease notation we set $n_0:=k$.
For $i=1,2$ define the lifted MRD codes $\cC_i:=\{\im(I_k\mmid M)\mid M\in\cM_i\}$.
Furthermore, set~$\cC_3:=\cC(\cM_3)$ and $\cC_4:=\cC(\cM_4)$.
Then, if~$\cC_1,\ldots,\,\cC_4$ are decodable then so is the linkage code~$\cC$.
More precisely, let~$\cV=\im(V_0\mmid V_1\mmid V_2)\subseteq\F^n,\, V_i\in\F^{K\times n_i}$, be a $K$-dimensional received word such that
$\ds(\cV,\cC)\leq (2d-1)/2$. Then exactly one of the following situations occurs.
\begin{alphalist}
\item $\rank(V_0\mmid V_1)< K/2$.
      In this case the unique closest codeword in~$\cC$  is in~$\cC'''$ and given by $\cU=\im(0\mmid 0\mmid M)$, where $M\in\cM_4$ is the unique matrix
      such that $\ds(\im(M),\,\im(V_2))\leq(2d-1)/2$.
\item $\rank(V_0\mmid V_2)< K/2$.
      In this case the unique closest codeword in~$\cC$  is in~$\cC''$ and given by $\cU=\im(0\mmid M\mmid 0)$, where $M\in\cM_3$ is the unique matrix
      such that $\ds(\im(M),\,\im(V_1))\leq(2d-1)/2$.
\item $\rank(V_0)>K/2$. In this case the unique closest codeword in~$\cC$  is in~$\cC'$ and given by $\cU=\im(I\mmid M_1\mmid M_2)$, where
      $M_i \in\cM_i$ are the unique matrices such that
      $\ds\big(\im(I\mmid M_i),\,\im(V_0\mmid V_i)\big)\leq(2d-1)/2$ for $i=1,2$.
\end{alphalist}
\end{theo}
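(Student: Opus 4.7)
My strategy is to exploit the dimension bounds from Remark~\ref{R-UcapV} (namely $\dim(\cU+\cV) < k + K/2$ and $\dim(\cU \cap \cV) > K/2$ for the unique closest codeword $\cU \in \cC$) and, in each of the three cases, to use an injective projection from $\cU$ that reduces decoding to the constituent codes. The decodability hypothesis $\ds(\cU,\cV)\leq(2d-1)/2$ applies since $2d\leq 2k$, so Remark~\ref{R-UcapV} is available.

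First, the case split. Viewing $\cC$ as the two-block linkage $\tilde{\cC}_1 \ast_{\tilde{\cC}_R} \tilde{\cC}_2$ described in the proof of Theorem~\ref{T-LinkMRD} (with first block of length $k$), Lemma~\ref{L-V1U1} yields $\cU \in \cC'$ if and only if $\rank V_0 > K/2$, which is case~(c). For the complementary case $\cU \in \cC'' \cup \cC'''$ (equivalently $\rank V_0 \leq K/2$), Lemma~\ref{L-V1U1} cannot be invoked with respect to the split $(k+n_1,n_2)$ because $\cC$ is not a linkage of that form, so I argue directly. Suppose $\cU = \im(0 \mmid 0 \mmid M) \in \cC'''$ and let $\pi': \F^n \to \F^{k+n_1}$ be the projection onto the first two blocks; then $\cU \subseteq \ker \pi'$, and rank-nullity yields
\begin{equation*}
  \dim(\cU+\cV) = \dim \pi'(\cU+\cV) + \dim\bigl((\cU+\cV) \cap \ker \pi'\bigr) \geq \rank(V_0 \mmid V_1) + k,
\end{equation*}
which combined with $\dim(\cU+\cV) < k + K/2$ forces $\rank(V_0 \mmid V_1) < K/2$, i.e.\ case~(a). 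The analogous argument with the middle block handles $\cU \in \cC''$ and gives case~(b). Mutual exclusivity of the three rank conditions is immediate: (c) rules out (a) and (b); simultaneous (a) and (b) would give, via $\rank V_2 \leq \rank(V_0 \mmid V_2) < K/2$, the contradiction $K = \rank(V_0 \mmid V_1 \mmid V_2) \leq \rank(V_0 \mmid V_1) + \rank V_2 < K$.

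For the decoding step, in each case the relevant coordinate projection $\pi$ is injective on $\cU$: in case~(c) because $\cU$ carries $I_k$ in the leading block, and in cases~(a), (b) because the nonzero block of $\cU$ is a matrix of full row rank $k$ taken from $\cM_4$ or $\cM_3$. This injectivity gives $\dim \pi \cU = k$ and $\dim(\pi\cU \cap \pi\cV) \geq \dim \pi(\cU \cap \cV) = \dim(\cU \cap \cV)$, while trivially $\dim \pi \cV \leq K$. Substituting into the subspace-distance formula,
\begin{equation*}
   \ds(\pi \cU, \pi \cV) = \dim \pi \cU + \dim \pi \cV - 2\dim(\pi \cU \cap \pi \cV) \leq \ds(\cU, \cV) \leq (2d-1)/2.
\end{equation*}
Since each constituent code $\cC_1, \cC_2, \cC_3, \cC_4$ has subspace distance $2d$, the unique codeword within distance $(2d-1)/2$ of the projected received word is exactly the corresponding block of $\cU$, and full row-rank of the retained matrix blocks ensures that the underlying $M_i$ is uniquely recovered from its row space.

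The main obstacle, I expect, lies in cases~(a) and~(b): Lemma~\ref{L-V1U1} is not applicable to the $(k+n_1,n_2)$ split, and one must derive the rank condition ab initio via the rank-nullity identity above. Obtaining strict inequality $< K/2$ (rather than $\leq$) is what makes the trichotomy identify the correct subcode without advance knowledge of $\cU$, and this crucially depends on the strict bound $\dim(\cU+\cV)<k+K/2$ in Remark~\ref{R-UcapV}.
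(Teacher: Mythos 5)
Your proof is correct and follows essentially the same route as the paper's: Lemma~\ref{L-V1U1} to detect whether $\cU$ lies in $\cC'$, the dimension bound $\dim(\cU+\cV)<k+K/2$ from Remark~\ref{R-UcapV} to derive the rank conditions characterizing $\cC''$ versus $\cC'''$, and injective coordinate projections to reduce decoding to the constituent codes. The only cosmetic differences are that you phrase the key inequality via rank--nullity rather than the paper's block-matrix rank computation, and that you deduce $\cU\in\cC'''$ in case~(a) from mutual exclusivity of the three rank conditions, whereas the paper re-applies Lemma~\ref{L-V1U1} to the sub-linkage $\cC'\cup\cC'''=\cC_1\ast_{\cC_R}\cC_4$.
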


\begin{proof}
First of all, the uniqueness of the matrices~$M$ and~$M_i$ in (a) --~(c) is guaranteed since the subspace codes
$\cC(\cM_3),\,\cC(\cM_4)$ and the lifted MRD codes $\cC_i$ all have subspace distance~$2d$.

Let us denote  by $\cU=\im(U_0\mmid U_1\mmid U_2)$ the unique codeword in~$\cC$ closest to~$\cV$.
We will use the notation $\cU_i=\im(U_i)$ and $\cV_i=\im(V_i)$ for $i=0,1,2$.

First we show that at most one of the~3 cases above can occur. Clearly, if $\rank(V_0)>K/2$, then neither~(a) nor~(b) can occur.
Let now $\rank(V_0)\leq K/2$ and assume $r:=\rank(V_0\mmid V_1)< K/2$.
We have to show that $\rank(V_0\mmid V_2)\geq K/2$.
After suitable row operations we may assume
\[
    (V_0\mmid V_1\mmid V_2)=\begin{pmatrix}V_{01}&V_{11}&V_{21}\\ 0&0&V_{22}\end{pmatrix},
\]
where the first block row has~$r$ rows.
Then $\rank(V_0\mmid V_1\mmid V_2)=K$ implies that $\rank(V_{22})> K/2$.
This implies $\rank(V_0\mmid V_2)> K/2$, as desired.
Using symmetry, all of this shows that if~$\rank(V_0)\leq K/2$, then at most one of the cases~(a) or~(b) can occur.

Next we show that exactly one of the cases~(a) --~(c) occurs.
To do so, it suffices to show that if~$\rank(V_0)\leq K/2$ then~(a) or~(b) must occur.
We know that $\cC=\tilde{\cC}_1\ast_{\tilde{\cC}_R}\tilde{\cC}_2$
with~$\tilde{\cC}_1,\,\tilde{\cC}_2,\,\tilde{\cC}_R$ as in the proof of Theorem~\ref{T-LinkMRD}.
Therefore, Lemma~\ref{L-V1U1} along with $\rank(V_0)\leq K/2$ implies $U_0=0$.
Thus, $\cU\in\cC''\cup\cC'''$.
Let us assume $\cU\in\cC''$, say $\cU=\im(0\mmid M\mmid 0)$.
Along with $\rank(M)=k$ we derive
\begin{equation}\label{e-V0V2}
  \rank(V_0\mmid V_2)+k\leq\rank\begin{pmatrix}V_0&V_1&V_2\\0&M&0\end{pmatrix}=\dim(\cU+\cV)< k+K/2,
\end{equation}
where the last inequality is due to Remark~\ref{R-UcapV}.
As a consequence, $\rank(V_0\mmid V_2)<K/2$. Similarly,~$\cU\in\cC'''$ implies $\rank(V_0\mmid V_1)<K/2$.

Now we turn to decoding for each of the three cases.

(a) Let $\rank(V_0\mmid V_1)< K/2$.
Suppose the closest codeword~$\cU$ is in~$\cC''$, say $\cU=\im(0\mmid M\mmid 0)$.
Then~\eqref{e-V0V2} shows that $\rank(V_0\mmid V_2)<K/2$.
But this means that also case~(b) occurs, a contradiction.
Hence~$\cU\in\cC'\cup\cC'''$.
But this code is a linkage code. Indeed,
\[
  \cC'\cup\cC'''=\cC_1\ast_{\cC_R}\cC_4,
\]
where $\cC_1=\{\im(I_k\mmid M_1)\mid M_1\in\cM_1\}$, $\cC_4=\cC(\cM_4)$, and $\cC_R=\cM_2$.
Hence Lemma~\ref{L-V1U1} implies $(U_0\mmid U_1)=(0\mmid 0)$, thus $\cU=\im(0\mmid0\mmid M)\in\cC'''$ with $M\in\cM_4$.
In particular, $\rank(M)=k$.
Consider the projection~$\pi_2$ of~$\F^{k+n_1+n_2}$ onto~$\F^{n_2}$.
Then $(\pi_2)|_{\cU}$ is injective and thus $\dim(\cU\cap\cV)\leq\dim(\cU_2\cap\cV_2)$.
This implies
 $\ds(\cV_2,\,\cU_2)\leq K+k-2\dim(\cV_2\cap\cU_2)\leq \ds(\cV,\,\cU)\leq(2d-1)/2$.
Thus, decoding~$\cV_2$ to its closest codeword in~$\cC(\cM_4)$ results in~$\cU_2$.
Using its unique matrix representation $M\in\cM_4$, i.e., $\cU_2=\im(M)$, we arrive at the correct decoding
$\cU=\im(0\mmid0\mmid M)$ of the received space~$\cV$.

(b) The case $\rank(V_0\mmid V_2)< K/2$ is analogous.

(c)  Let $\rank(V_0)> K/2$.
Then Lemma~\ref{L-V1U1} applied to $\tilde{\cC}_1\ast_{\cC_R}\tilde{\cC}_2$ (see proof of Theorem~\ref{T-LinkMRD})
implies $(U_0)\neq 0$. Thus $\cU\in\cC'$.
In particular, we may assume $U_0=I_k$.
For $i=1,2$ let $\cV_i':=\im(V_0\mmid V_i)$ and $\cU_i':=\im(I_k\mmid U_i)$.
Then $\cU_i'\in\cC_i$ for $i=1,2$, where
$\cC_i=\{\im(I_k\mmid M)\mid M\in\cM_i\}$ is the lifting of the MRD code~$\cM_i$ for $i=1,2$.
Consider the projections
\[
  \psi_i:\F^{k+n_1+n_2}\longrightarrow \F^{k+n_i},\quad (a_0,a_1,a_2)\longmapsto (a_0,a_i)
\]
Then $(\psi_i)|_{\cU}$ is injective and thus $\dim(\cU\cap\cV)\leq\dim(\cU_i'\cap\cV_i')$.
As in~(a) this implies $\ds(\cU_i',\,\cV_i')\leq \ds(\cU,\,\cV)\leq(2d-1)/2$ for $i=1,2$.
Hence~$\cV_i'$ can be uniquely decoded w.r.t.\ $\cC_i$ and the closest codeword is given by~$\cU_i'$.
Using the unique matrix representations $(I\mmid U_i),\,U_i\in\cM_i$, of the spaces~$\cU_i$, we arrive at the correct decoding of~$\cV$.
\end{proof}

We summarize the result in the following algorithm.

\setcounter{algocf}{\value{theo}}
\begin{algorithm}\label{DecAlg}
 \KwData{a decodable $K$-dimensional subspace $\cV=\im(V_0\mmid V_1\mmid V_2)$ with
    $(V_0\mmid V_1 \mmid V_2)\in\F_q^{K\times n}$}
 \KwResult{the unique $\cU\in\cC'\cup\cC''\cup\cC'''$ such that $\dist{\cV,\cU}\leq\frac{2d-1}{2}$.}
	\eIf{$\rank(V_0\mmid V_1)<\frac{K}{2}$}{decode $\im(V_2)$ in $\cC(\cM_4)$ to $\im(U_{2})$\;
	\KwRet{$\cU=\im(0\mmid 0\mmid U_{2})$.}}{
	\eIf{$\rank(V_0\mmid V_2)<\frac{K}{2}$}{decode $\im(V_1)$ in $\cC(\cM_3)$ to $\im(U_{1})$\;
	\KwRet{$\cU=\im(0\mmid U_{1}\mmid 0)$.}}
	{decode $\im(V_0\mmid V_1)$ in $\cC_1$ to $\im(I_k\mmid U_{1})$\;
	 decode $\im(V_0\mmid V_2)$ in $\cC_2$ to $\im(I_k\mmid U_{2})$\;
	\KwRet{$\cU=\im(I_k\mmid U_{1}\mmid U_{2})$.}}}
 \caption{Decoding algorithm for the codes in Theorem~\ref{T-LinkMRD}}
\end{algorithm}
\addtocounter{theo}{1}

One should observe that in the last case of the algorithm, the two decoding steps can be performed in parallel.
A similar, but not identical, form of parallelizing decoding is also used for the spread codes in~\cite{GoRa14};
recall Example~\ref{E-GoRa14} for the relation to our linkage codes.

Clearly, the construction in Theorem~\ref{T-LinkMRD} and its decoding can easily be
generalized to more than~$3$ blocks.

\begin{rem}\label{R-LinkGab}
A very efficient decoding is obtained when we use Gabidulin codes for~$\cM_1$ and~$\cM_2$ and lifted Gabidulin codes
for~$\cM_3$ and $\cM_4$  (thus~$n_i>k$). In this case, all codes relevant for decoding in the previous proof are
lifted Gabidulin codes, and the decoding algorithm derived by Silva et al.~\cite{SKK08} may be employed.
If $n_i>\!> k$, then even better efficiency is obtained by using direct products of Gabidulin codes as the MRD codes and the lifting
of such a code for~$\cM_3$ and~$\cM_4$; see~\cite[Sec.~VI.E]{SKK08}.
In fact, our code~$\cC'$ in Theorem~\ref{T-LinkMRD} (or rather its generalization to more than~$3$ blocks) is of the form
proposed in~\cite{SKK08} and with the above we have shown how to enlarge the code without compromising its properties.
In this sense, our results put the considerations in~\cite[Sec.~VI.E]{SKK08} in a broader context.
\end{rem}

\bibliographystyle{abbrv}
\bibliography{literatureAK,literatureLZ}
\end{document}